\newtheorem{proposition}{Proposition}
\begin{document}
\author{\authorblockA{Thakshila   Wimalajeewa\thanks{$^{\dagger}$ Dept. of EECS, Syracuse University, Syracuse, NY 13244, USA, Email: twwewelw@syr.edu, varshney@syr.edu}$^{\dagger}$ ~\IEEEmembership{Member,~IEEE},    Pramod   K. Varshney$^{\dagger}$~\IEEEmembership{Fellow,~IEEE} and Wei   Su\thanks{$^{\ddagger}$  Email: weisu888@yahoo.com}$^{\ddagger}$~\IEEEmembership{Fellow,~IEEE}}
}
\title{Detection of Single vs Multiple Antenna Transmission Systems  Using   Pilot Data}

\maketitle

\begin{abstract}
In this paper, we consider the problem of classifying  the transmission system  when it is not known \emph{a priori} whether the transmission is via a single antenna or multiple antennas. The receiver  is assumed to be  employed with a known number of antennas.    In  a data frame transmitted by most multiple input multiple output  (MIMO) systems, some pilot or training  data is inserted for symbol timing synchronization and  estimation of   the channel. Our goal is to perform MIMO transmit antenna  classification using this  pilot data.  More specifically, the problem of determining the  transmission  system  is  cast as a multiple hypothesis testing problem where the number of hypotheses is equal to the maximum number of transmit antennas.  Under the assumption of  receiver having the exact knowledge of pilot data used for timing synchronization and channel estimation, we consider maximum likelihood (ML)  and correlation   test statistics to classify the  MIMO transmit system. When only probabilistic knowledge of pilot data is available at the receiver, a hybrid-maximum likelihood (HML) based test statistic is constructed using the expectation-maximization  (EM) algorithm. The performance of the proposed algorithms is illustrated via  simulations and comparative  merits  of different  techniques in terms of the computational complexity and  performance  are  discussed.
\end{abstract}
\begin{keywords}
Blind signal recognition, MIMO, SIMO, ML estimation,   GLRT detector, Correlation detector
\end{keywords}

\footnotetext[1]{Distribution Statement A: Approved for public release; distribution  is unlimited}

%
\IEEEpeerreviewmaketitle

\section{ Introduction}
The rapidly growing interest in software defined and cognitive radios, and pervasive ubiquitous sensing has been fuelled by the development of intelligent communication systems    \cite{Mitola_93,mitola_pc99,haykin1}. In spite of the tremendous advances made, practical and efficient realization of these systems for envisaged applications in different domains  requires that several obstacles  be overcome.  Signal recognition with minimal amount of prior information is one of the key requirements in most intelligent communication systems  \cite{su_iet07,Ramkumar_cir09,Choqueuse_TSP10,Dobre_2015,Xu_VT2010}.
With the increasingly diverse radio frequency signal population ranging from simple narrow-band analog and digital modulations to wideband digital modulation schemes utilizing multiple transmit antennas, the problem of blind signal recognition continues to be extremely challenging.

Multiple-input-multiple-output (MIMO) systems  is a physical layer technology that can provide many benefits through multiple antennas and advanced signal processing.   With the availability of perfect timing and frequency information, and when the parameters related to the observation model, such as,  channel state information (CSI), the number of transmit  antennas and encoding schemes  are available at the receiver, the problem of recovering the transmitted signal has been  studied widely \cite{Larsson_SPM09}. However, when the receiver and the transmitter are noncooperative or have limited cooperation, MIMO signal recognition  is challenging \cite{Eldemerdash_CoMS16}. More specifically, when the transmitters are employed with sophisticated communication technologies, there is a need for new blind signal recognition algorithms  that are able to operate in such environments. There  are recent efforts that consider how to determine  the modulation schemes used by MIMO transmit antennas  when the transmitter and receiver are equipped with multiple antennas   \cite{Choqueuse_MTA1,Hassan_twc12,Muhlhaus_vtc1,
Muhlhaus_CL13,Zhu_MILCOM1,Marey_SPL14,Kharbech_WCL14,Chikha_WCL14}. However, in most of these  works, it was assumed that the number of transmit antennas is  known at the receiver.

In a typical MIMO system, multiple antennas at the transmitter  require multiple RF chains which consist of amplifiers, analog to digital converters, mixers, etc., that are typically very expensive. An approach for reducing the cost while maintaining a high potential data rate of a MIMO system is to employ a reduced number of RF chains at the transmitter and attempt to optimally allocate each chain to one of the larger number of transmit antennas which are usually cheaper  \cite{Jin_tsp13}. Thus, when the transmitter changes the number of transmit antennas depending on the application or the requirement, the receiver finds it is challenging  to determine  the transmission scheme in order to perform a given  inference task.

 The problem of determining the number of sources/signals  in  array signal processing  has been addressed in early works  \cite{Wax_85,AOUADA_2003,Xu_tsp94,Chen_tsp91}. Determining the number of transmit antennas  used in a MIMO system  has been addressed by several authors in \cite{Oularbi_2013,Oularbi_CL13,Somekh_MILCOM07,Ohlmer_VTC08,Ohlmer_VTC08_2,Mohammadkarimi_2015,Marey_TWC12}. In \cite{Somekh_MILCOM07}, the authors consider the problem of determining  the number of transmit antennas utilizing  objective information
theoretic criteria such as Akaike's information criterion (AIC) and minimum descriptor length (MDL). Another approach to estimate the number of transmit  antennas in a MIMO orthogonal frequency division multiplexing  (OFDM) system based on pilot patterns is presented in \cite{Oularbi_2013,Oularbi_CL13}. In these works, the problem of  determining  the number of transmit antennas, $n$,  is formulated as that of detecting $n$ pilot patterns in an OFDM system.  In \cite{Ohlmer_VTC08,Ohlmer_VTC08_2}, the authors consider the determination   of the number of transmit antennas in a MIMO-OFDM system  when each burst is preceded by a preamble. Their algorithm is based on  the estimation of  channels with different hypothetical numbers
of transmit antennas. In \cite{Mohammadkarimi_2015},  blind  detection of
the number of transmit antennas is studied  by exploiting the time-diversity of
 fading channels. Classification of multiple antenna systems in the presence of possible transmission impairments exploiting  cyclostationarity property of space-time block codes (STBCs)  was considered in \cite{Marey_TWC12}.  In all these works, it is assumed that the symbol timing synchronization at the receiver has been  achieved.     In \cite{Berenguer_TSP05,Jin_tsp13}, the problem of antenna selection at the transmitter is considered in which the goal is to decide which set of antennas are  to be used for transmission  based on different performance criteria at the receiver.

Our goal in this paper is to  study the problem of determining  the number of transmit antennas  prior to performing symbol timing synchronization and channel estimation when the transmitter and the receiver  have limited cooperation.   In a practical  MIMO system, frequency and symbol timing synchronization, and channel estimation need to  be performed  at the receiver before recovering  data. While there are several techniques proposed for symbol timing and channel estimation  \cite{Wang_WCNC07}, the use of pilot symbols  in MIMO systems has been studied extensively for symbol timing synchronization \cite{Naguib_JSC98,Bliss_TSP10} and  channel estimation \cite{Budianu_ICC01,Biguesh_TSP06,Naguib_JSC98,Shin_TC08,Hassibi_TIT03}.    The knowledge of the pilot sequences assigned to  each transmit antenna may not be available exactly at the receiver  depending  on the level  of cooperation between the transmitter and the receiver. In this paper, we study the problem of MIMO transmit  system classification  based on pilot data. In particular, we propose several test statistics, develop  algorithms  and  discuss relative  merits of different  algorithms considering asynchronous (in the absence of symbol timing synchronization) and synchronous cases (in the presence of symbol timing synchronization).  In the asynchronous case,  maximum likelihood (ML) and correlation classifiers  are considered assuming that an exact knowledge of the pilot sequence assignment for each antenna is available   at the receiver.  The results are then specified  for  the synchronous case.  When the exact knowledge of the pilot data assignment of each transmit antenna is not known at the receiver, we develop hybrid maximum likelihood (HML) based classification schemes  using the expectation-maximization (EM)
algorithm for both asynchronous and synchronous cases. We further specify the results when  the transmission system is binary; i.e., transmission can be performed either with a single antenna or multiple antennas with a known number of antennas. Then, the MIMO transmit system classification problem reduces to the problem of  MIMO vs. single-input multiple-output (SIMO) detection.  The performance of each classifier/detector  is illustrated  via simulations  and  comparative  merits  of different  techniques are  discussed.


The rest of the paper is organized as follows. In Section \ref{sec_background}, the background and the problem formulation are presented. In Section \ref{sec_knownPilot}, the problem of asynchronous MIMO transmit antenna   classification is discussed when the receiver has the exact knowledge of the pilot sequences. The results are provided for   the   synchronous case as well for  the case when the transmit system is binary.  In Section \ref{sec_UnknownPilot}, the analyses are extended to the case  where  the  exact knowledge of the pilot sequences  is  not known at the receiver. Simulation results are given in Section \ref{simulation} and  concluding remarks  are given  in Section \ref{conclusion}.

\subsection*{Notation and Terminology}
We use 'Tx' and 'Rx' to denote 'transmit' and   'receive', respectively. Lower case letters, e.g.,  $x$,  are used to denote scalars and functions  while boldface lower case letters, e.g.,  $\mathbf x$,  are used to denote vectors. Boldface upper case letters, e.g., $\mathbf X$,  are used to denote matrices. The $(j,k)$-th element of a matrix $\mathbf X$ is denoted by $(\mathbf X)_{jk}$.  Matrix transpose and Hermitian transpose operators are denoted by  $(\cdot)^T$ and $(\cdot)^H$, respectively. The notation $||\cdot||_F$ is used to denote the Frobenius norm. The trace operator is denoted by $\mathrm{tr}(\cdot)$.

\section{Background and Problem Formulation}\label{sec_background}
Consider a MIMO communication system with $m$ receiver (Rx) antennas and $n$ transmit (Tx) antennas (often called $m\times n$ MIMO)  as depicted  in Fig. \ref{fig_MIMO_model}.
\begin{figure}[h!]
\centerline{\epsfig{figure=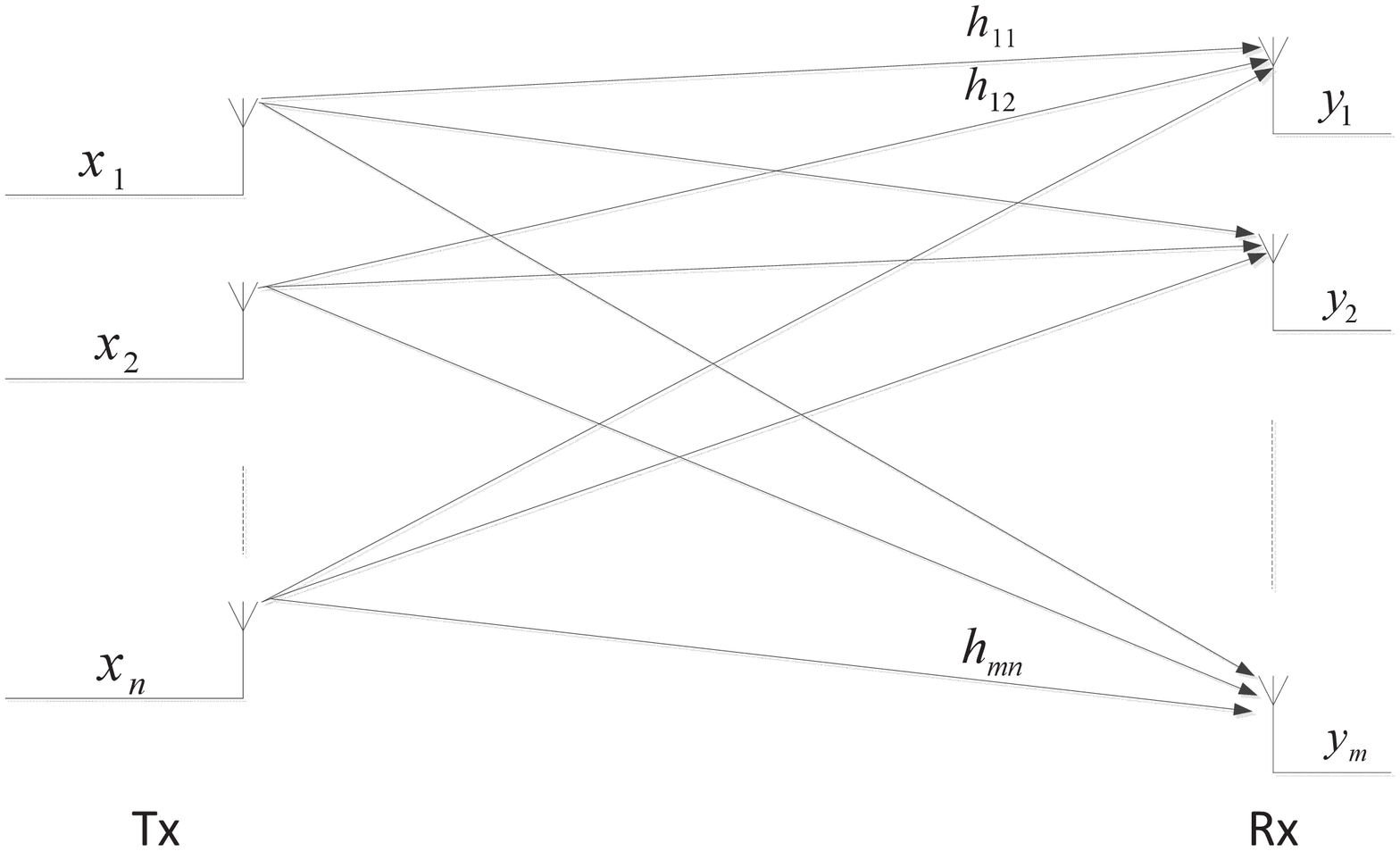,width=8.50cm}}
\caption{MIMO system with $n$ Tx antennas and $m$ Rx antennas}\label{fig_MIMO_model}
\end{figure}
The symbols   transmitted by each Tx antenna are assumed to undergo independent fading.

Let $x_i[l]$ be the $l$-th symbol transmitted by the $i$-th antenna. Note that $x_i[l]$   can be either a pilot/training symbol  or (encoded) data  symbol. After pulse shaping, the transmitted signal by the $i$-th Tx antenna can be   expressed as,
\begin{eqnarray*}
r_i(t) = \beta \underset{l}{\sum} x_i[l] g(t-lT_s)
\end{eqnarray*}
where  $T_s$ is the symbol period,  and  $g(.)$ denotes  the transmitted pulse. We denote the symbol rate as $R_s = \frac{1}{T_s}$.  Further, in general, we assume different Tx powers  for data symbols and pilot symbols. Thus,  $\beta = \beta_d$ if $x_i[l]$ represents a data symbol and  $\beta = \beta_p$ if $x_i[l]$ represents a pilot symbol. We
assume that the transmit power is equally distributed across the transmitted antennas, thus, the amplitudes are identical across them.   The transmitted pulse is assumed to be a square root raised-cosine ($\sqrt{RC}$) pulse \cite{Proakis:book}  which is given by,
\begin{eqnarray*}
g(t) = \frac{4\epsilon}{\pi\sqrt{T_s}} \frac{\cos((1+\epsilon)\pi t /T_s) + \frac{\sin((1-\epsilon)\pi t /T_s)}{(4\epsilon t /T_s)}}{1 - (4 \epsilon t /T_s)^2}
\end{eqnarray*}
where $\epsilon$ is the roll-off factor.

With a given delay $\tau_q$, the  received signal at the $q$-th  Rx antenna is given by \cite{Proakis:book},
\begin{eqnarray}
y_q(t) = \sum_{i=1}^n h_{qi} \underset{l}{\sum} \beta  x_i[l] g(t-lT_s - \tau_q ) + w_q(t)\label{Rx_signal}
\end{eqnarray}
for $0< t< T_0$ where $h_{qi}$ is the complex fading channel coefficient from the $i$-th Tx antenna to the $q$-th Rx antenna (assumed to be fixed for given $T_0$),  and $w_q(t)$ is the additive  noise. For a given number of Tx antennas, we further assume that $\tau_q = \tau$ for all $q$ which is a reasonable assumption when the Rx antennas are located not very far from each other.
If  perfect timing and frequency information is available  at the receiver, the $l$-th received   symbol in (\ref{Rx_signal}) after matched filtering can be written as
\begin{eqnarray}
y_q[l] = \sum_{i=1}^n h_{qi}  \beta  x_i[l] + w_q[l]\label{Rx_signal_sync}
\end{eqnarray}
for $l=1,2,\cdots$ and $q=1,\cdots,m$.

However, when  the time delay is unknown, it has to be estimated before performing matched filtering. Further, even if the symbol timing is available or symbols are synchronized, it is required to estimate the channel state information at  the receiver before recovering the data  symbols. In order to perform all these operations, the number of Tx antennas used by the transmitter should be known at the receiver.

In this paper, we  classify   the Tx system in terms of the number of transmit antennas   before recovering or extracting required information about  the transmitted signal. When the maximum number of Tx antennas is known, this problem can be formulated as a multiple hypothesis testing (or classification) problem.
Classification is performed based on training/pilot data used for symbol timing synchronization and channel estimation.  In many communication systems, timing synchronization and channel estimation  are  achieved with the aid of pilot data \cite{Budianu_ICC01,Biguesh_TSP06,Naguib_JSC98,Shin_TC08,Hassibi_TIT03,Bliss_TSP10}.   For example, in Fig. \ref{fig_MIMO_dataformat}, one general format of the data transmitted by each Tx  antenna is illustrated in a time division multiple access (TDMA) framework \cite{TDMA}. With this timing and framing structure,   each Tx antenna transmits a burst of length $L_b$ which contains pilot data for symbol timing, pilot data for channel learning and encoded (informative) data.   This particular format (with/without  slight modifications) with the same pilot sequence repeating periodically at a given antenna is used in commercial systems such as  narrowband TDMA/STCM-based modems  \cite{Tarokh_TIT1}.


 \begin{figure*}
\centerline{\epsfig{figure=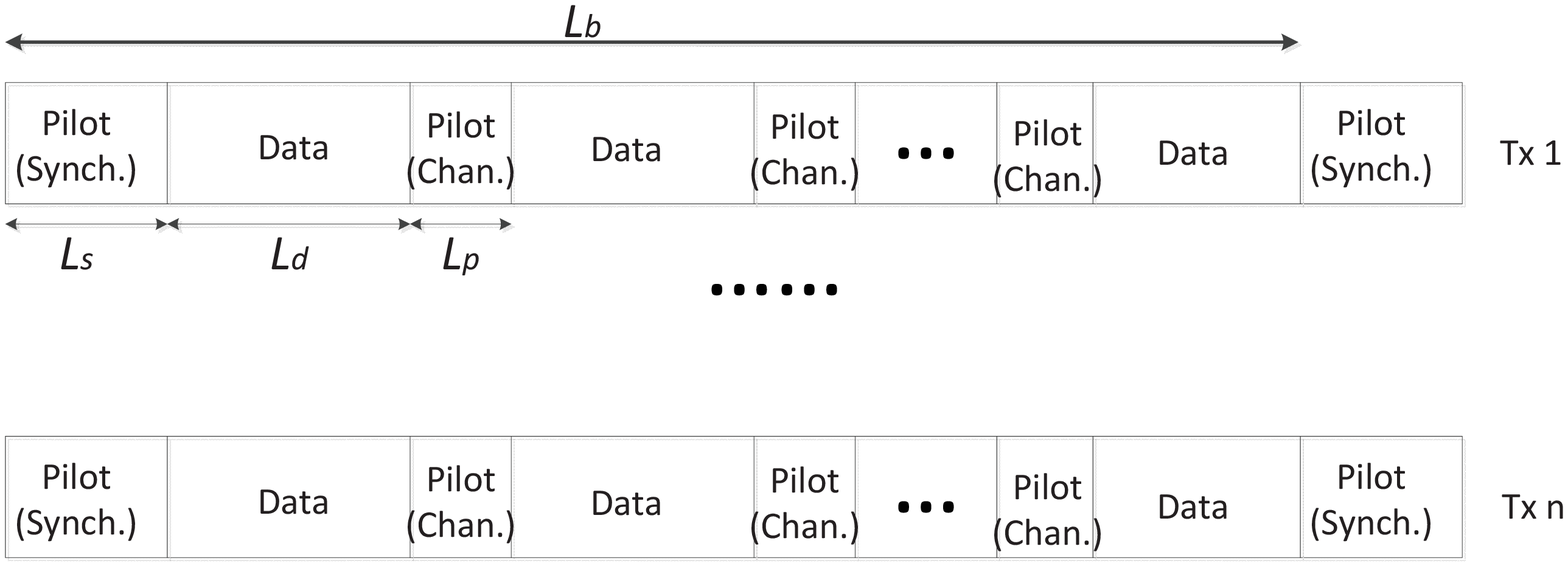,width=10.50cm}}
\caption{Format of a data burst sent by multiple antennas with TDMA \cite{Tarokh_TIT1}}\label{fig_MIMO_dataformat}
\end{figure*}

We consider several scenarios. In the asynchronous case,  we use the pilot data used for timing synchronization to classify the Tx system. On the other hand,  when perfect timing synchronization is  achieved at the receiver (without the use of pilot symbols) or when  time delay is  negligible, we use pilot sequences used for channel estimation to classify the Tx system. Depending on which pilot data is used and the knowledge available  at the receiver about the pilot data,  the decision statistics used for Tx system  classification are different.

\section{Tx System  Classification with Exact Knowledge of Pilot Data at the Receiver}\label{sec_knownPilot}
Each Tx antenna is assumed to use an equal number of training/pilot symbols (say $L_s$)  for timing synchronization   in  each  transmitted frame of length $L_b$. Channel estimation is performed periodically within a  burst of length $L_b$ with a different set of  pilot data of length $L_p$. We assume that the pilot sequences  are selected from a given pool of   orthogonal sequences. The use of orthogonal sequences for symbol timing and channel learning has been used in many MIMO systems  \cite{Tarokh_TIT1}.
 Although it is reasonable to assume to have  different number of pilot symbols per antenna \cite{Samardzija_TSP03}, here we consider    that  the pilot symbols  are of the  same length. In this section, we assume  that the  pilot sequences $x_i[l]$'s  for $l=1,\cdots,L_s ( \mathrm{or} ~L_p)$ assigned to the $i$-th  antenna for $i=1,\cdots,n$ are  known at the receiver. This is a valid assumption when the Tx and Rx have  limited cooperation.

Let  the vector representation of $y_q(t)$, $w_q(t)$ and  $ r_i(t)$ be $ [y_{q}[1], \cdots, y_{q}[L_s]]^T \equiv \mathbf y_q$, $[w_{q}[1], \cdots, w_{q}[L_s]]^T \equiv \mathbf w_q$ and $[r_{i}[1], \cdots, r_{i}[L_s]]^T \equiv \mathbf r_i$, respectively for $q=1,\cdots,m$ and $i=1,\cdots,n$ and $L_s= \frac{T_0}{T_s}$ where $T_0$ is the observation time period for pilot symbols. Then the received signals at the $m$ receivers can be written in matrix form as,
  \begin{eqnarray}
\mathbf Y= \mathbf H \mathbf R_{\tau} + \mathbf W\label{matrix_timing}
\end{eqnarray}
where $\mathbf Y = [\mathbf y_1, \cdots, \mathbf y_m]^T$, $\mathbf R_{\tau}$ represents  the equivalent discrete time representation of  $r_i(t - \tau)$ for $i=1,\cdots,n$ similar to   $\mathbf R = [\mathbf r_1, \cdots, \mathbf r_n]^T$, and $\mathbf W = [\mathbf w_1, \cdots, \mathbf w_m]^T$. The noise process  is  assumed to be uncorrelated and  Gaussian (and the elements of $\mathbf W$ are assumed to be iid Gaussian with mean zero and variance $\sigma_w^2$). The matrix $\mathbf H$ is the channel matrix where   $(\mathbf H)_{qi} = h_{qi}$  for $q=1,\cdots, m$ and $i=1,\cdots,n$.

After matched filtering  with a given delay $\tau$,  (\ref{matrix_timing}) is equivalently represented by,
\begin{eqnarray*}
\mathbf Y_{\tau}= \mathbf H \mathbf R + \mathbf W
\end{eqnarray*}
where the $(q,k)$-th element of $\mathbf Y_{\tau}$ is given by,
\begin{eqnarray*}
(\mathbf Y_{\tau})_{qk} = y_q^{\tau}[k] = \int_{T_0} y_q(t) g(t-kT_s -\tau) dt.
\end{eqnarray*}

Let there be a maximum of $n_{\max}$  antennas  possible for the MIMO system. The transmission system can choose  any number of antennas, $n$,  from $[1, n_{\max}]$ with equal probability.    Then, the MIMO Tx system classification problem can be treated as a multiple hypothesis testing problem with  $n_{\max}$  number of hypotheses. Each hypothesis $\mathcal H_j$,  for $j=1,\cdots, n_{\max}$,  represents a MIMO system with a given number of antennas (including SIMO system when $n=1$).
The multiple hypothesis testing problem is given by,
\begin{eqnarray}
\mathcal H_j: \mathbf Y_{\tau_j} = \mathbf H_j \mathbf R_j + \mathbf W\label{mult_hyp_timing}
\end{eqnarray}
for $j=1, \cdots, n_{\max}$.
 It is noted that, (\ref{mult_hyp_timing}) has to be solved in the presence of  unknown  parameters $\mathbf H_j$,  and $\tau$ under $\mathcal H_j$ for $j=1,\cdots,n_{\max}$.

In the  case where timing information is available at the receiver (i.e. the synchronous case), the received signal matrix can be represented by,
  \begin{eqnarray}
  \mathbf Y= \mathbf H \mathbf R + \mathbf W\label{obs_matrix_sync}
  \end{eqnarray}
where now  $(\mathbf Y)_{qk}=y_q[k] =\sum_{i=1}^n  \beta_p h_{qi}   x_i[k] + w_q[k]$ and $(\mathbf R)_{ik} = r_i[k] = \beta_p x_i[k] $ for $q=1,\cdots,m$ and $k=1,\cdots, L_p$. Then, similar to (\ref{mult_hyp_timing}), the synchronous MIMO classification problem   can be cast as,
\begin{eqnarray}
\mathcal H_j: \mathbf Y = \mathbf H_j \mathbf R_j + \mathbf W. \label{mult_hyp_timing_synch}
\end{eqnarray}
Thus, in the synchronous case, MIMO  classification is required to be performed based on (\ref{mult_hyp_timing_synch}) in the presence of  unknown   $\mathbf H_j$ under   $\mathcal H_j$ for $j=1,\cdots, n_{\max}$.


\subsection{ML Classifier}
In the maximum likelihood (ML) framework, the  unknown parameters are found so that the likelihood function under each hypothesis is maximized.  For the asynchronous case, let $\hat{\tau}_j$ ($\tau$ under $\mathcal H_j$ is denoted by $\tau_j$ for clarity) and $\hat{\mathbf H}_j$ denote the ML estimates of $\tau$ and $\mathbf H_j$, respectively.
Under $\mathcal  H_j$, given $\mathbf H_j$ and $\tau_j$, $\mathbf Y$ has the following matrix variate normal distribution:
\begin{eqnarray}
&~&p(\mathbf Y_{\tau_j} |\mathbf H_j, \tau_j,  \mathcal H_j) \nonumber\\
&=& \frac{1}{(2\pi\sigma_w^2)^{mL_p/2}} e^{-\frac{1}{2} \mathrm{tr}\left[\frac{1}{\sigma_w^2} (\mathbf Y_{\tau_j} - \mathbf H_j \mathbf R_j)^H (\mathbf Y_{\tau_j} - \mathbf H_j \mathbf R_j)\right]}.\label{pdf_Hj}
\end{eqnarray}
The ML  estimates  (MLEs) of $\mathbf H_j$ and $\tau_j$   are computed  so that the probability density function (pdf) in (\ref{pdf_Hj}) is maximized with respect to $\mathbf H_j$ and $\tau_j$. They  can be obtained  as the solutions to the following two equations:
\begin{eqnarray}
\mathbf H_j = \mathbf Y_{\tau_j} \mathbf R^H_j(\mathbf R_j \mathbf R_j^H)^{-1}\label{eq1_MIMO}
\end{eqnarray}
and
\begin{eqnarray}
\underset{i,k}{\sum}(\mathbf H_j \mathbf R_j)_{ik} \frac{\partial y_{i}^{\tau_j}[k]}{\partial \tau_j} = 0\label{eq2_MIMO}
\end{eqnarray}
where  $y_i^{\tau_j}[k] = \int_{T_0} y_i(t) g(t-kT_s -\tau_j) dt$. It is noted that (\ref{eq1_MIMO}) and (\ref{eq2_MIMO}) are obtained by letting the partial derivatives of $p(\mathbf Y_{\tau_j} |\mathbf H_j, \tau_j,  \mathcal H_j)$ in (\ref{pdf_Hj}) with respect to $\mathbf H_j$ and $\tau_j$ to be zero.  From (\ref{eq1_MIMO}) and (\ref{eq2_MIMO}), $\hat{\tau}_j$ can be computed as the solution for $\tau_j$ to the following equation:
\begin{eqnarray}
\underset{i,k}{\sum}(\mathbf Y_{\tau_j}{\mathbf P}_j)_{ik} \frac{\partial y_{i}^{\tau_j}[k]}{\partial \tau_j} = 0 \label{eq2b_MIMO}
\end{eqnarray}
where
\begin{eqnarray}
{\mathbf P}_j = \mathbf R^H_j(\mathbf R_j \mathbf R_j^H)^{-1}{\mathbf R}_j.\label{P_j}
 \end{eqnarray}
 Once $\hat{\tau}_j$ is obtained, $\hat{\mathbf H}_j$ is given by,
\begin{eqnarray*}
\hat{\mathbf H}_j= \mathbf Y_{\hat{\tau}_j} \mathbf R^H_j(\mathbf R_j \mathbf R_j ^H)^{-1}
\end{eqnarray*}
which is the least squared (LS) channel estimator for the estimated time delay  $\hat{\tau}_j$.


Then, the ML classifier selects the hypothesis which gives the maximum likelihood function:
\begin{eqnarray*}
\hat j  &=& \underset{j}{\arg\max} \frac{1}{\sigma_w^2} \left\{\mathrm{tr}(\mathbf R_j^H \hat{\mathbf H}_j^H \mathbf Y_{\hat{\tau}_j})
-\frac{1}{2} \mathrm{tr}(\mathbf R_j^H \hat{\mathbf H}_j^H \hat{\mathbf H}_j \mathbf R_j) \right\}.
\end{eqnarray*}

For the synchronous case,   the  ML classifier  reduces to
\begin{eqnarray}
\hat j = \underset{j}{\arg \min}~ \mathrm{tr}((\mathbf P_j^{\bot}) \mathbf Y^H \mathbf Y)\label{ML_sync}
\end{eqnarray}
where  $\mathbf P_i$ is as defined in (\ref{P_j}) and $(\mathbf Y)_{qk} = \sum_{l=1}^n \beta_p h_{ql}   x_l[k] + w_q[k]$.

\subsubsection{MIMO vs. SIMO detection}
In this section, we specify the results when the Tx system is binary; i.e., transmission can be via either a single antenna or multiple antennas (i.e., SIMO vs. MIMO). This becomes a binary detection problem with:
\begin{eqnarray}
\mathcal H_1 (\mathrm{MIMO}): ~ \mathbf Y_{\tau} &=& \mathbf H_M \mathbf  R_{M} + \mathbf W\nonumber\\
\mathcal H_2 (\mathrm{SIMO}): ~\mathbf Y_{\tau} &=& \mathbf H_S \mathbf R_{S} + \mathbf W \label{hyp_MIMO_SIMO_timing}
\end{eqnarray}
where the subscripts $M$ and $S$ are  used to denote the multiple  antenna and single antenna cases, respectively. It is noted that $\mathbf H_S$ is a $m \times 1$  vector and  $\mathbf R_{S}$ is  a row vector.
In this case, the ML test statistic, which is known as the generalized likelihood ratio test (GLRT),  reduces to
\begin{eqnarray*}
\Lambda_{GLRT, async.} &=& \frac{1}{\sigma_w^2} \left\{\mathrm{tr}(\mathbf R_M^H \hat{\mathbf H}_M^H \mathbf Y_{\hat{\tau}_M} - \mathbf R_S^H \hat{\mathbf H}_S^H \mathbf Y_{\hat{\tau}_S}) \right.\nonumber\\
&-& \left.\frac{1}{2} \mathrm{tr}(\mathbf R_M^H \hat{\mathbf H}_M^H \hat{\mathbf H}_M \mathbf R_M - \mathbf R_S^H \hat{\mathbf H}_S^H \hat{\mathbf H}_S \mathbf R_S)\right\}.
\end{eqnarray*}

In the synchronous case, the GLRT test statistic can be expressed as
\begin{eqnarray}
\Lambda_{GLRT,sync} =  \mathrm{tr}((\mathbf P_S^{\bot} - \mathbf P_M^{\bot}) \mathbf Y^H \mathbf Y)\label{GLRT_Test}
\end{eqnarray}
where  $\mathbf P_S^{\bot} = \mathbf I - \mathbf P_S$ and $\mathbf P_M^{\bot} = \mathbf I - \mathbf P_M$ with
$
\mathbf P_S= \mathbf R_S^{H}(\mathbf R_S \mathbf R_S^{H})^{-1} \mathbf R_S
$
and
$
\mathbf P_M= \mathbf R_M^{H}(\mathbf R_M \mathbf R_M^{H})^{-1} \mathbf R_M$, as defined before.
It is noted that  the GLRT test statistic in the synchronous case can be computed fairly easily.

\subsubsection{Design of threshold for GLRT in the synchronous case}\label{threshold}
In the following,  we design the threshold of the synchronous  GLRT detector so that the probability of false alarms is kept under a desired value.  The probabilities  of false alarm and  detection of the GLRT are given by,
 \begin{eqnarray*}
 P_f = Pr (\Lambda_{GLRT,sync} \geq \tau_g |\mathcal H_2 )
 \end{eqnarray*}
 and
  \begin{eqnarray*}
 P_d = Pr (\Lambda_{GLRT,sync} \geq \tau_g |\mathcal H_1 )
 \end{eqnarray*}
respectively, where $\tau_g$ is the threshold. With Gaussian approximation, we can show that the threshold $\tau_g$  that  maintains  the probability of false  alarm at $\alpha$ can be approximated as given in the following proposition.

\begin{proposition}\label{prop1}
The threshold of the synchronous  GLRT detector which ensures $ P_f\leq \alpha$ can be approximated by
\begin{eqnarray}
\tau_g \approx  (\tilde\sigma Q^{-1}(\alpha) + \tilde\mu).\label{taug}
\end{eqnarray}
where \begin{eqnarray}
\tilde \mu &\approx&  (\tilde{\mathbf R}_S^{H} (\mathbf P_M - \mathbf P_S) \tilde{\mathbf R}_S)\sum_{l=1}^m |\hat{\mathbf H}_S(l)|_2^2  \nonumber\\
&+& m~\sigma_w^2 \mathrm{tr}(\mathbf P_M - \mathbf P_S)\label{tmu}
\end{eqnarray}
and
 \begin{eqnarray}
 \tilde\sigma^2 &\approx& 2 \sigma_w^2 ( \tilde{\mathbf R}_S^{H}( \mathbf P_M  + \mathbf P_S - 2\mathbf P_M \mathbf P_S) \tilde{\mathbf R}_S\sum_{l=1}^m{|\hat{\mathbf H}_S(l)|^2} \nonumber\\
 &+&  m \sigma_w^4\mathrm{tr}(\mathbf P_M+\mathbf P_S -2\mathbf P_M \mathbf P_S) \label{tsigma}
 \end{eqnarray}
 where $\sigma_w^2 = \mathbb E\{|(\mathbf W)_{ij}|^2\}$ and $\tilde{\mathbf R}_S = \mathbf R_S^T$.
\end{proposition}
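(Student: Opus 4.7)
The plan is to compute the first two moments of $\Lambda_{GLRT,sync}$ under $\mathcal{H}_2$ and then invoke a Gaussian approximation to convert the false-alarm constraint $P_f \leq \alpha$ into the threshold formula (\ref{taug}). Substituting $\mathbf{Y} = \mathbf{H}_S \mathbf{R}_S + \mathbf{W}$ into $\Lambda_{GLRT,sync} = \mathrm{tr}((\mathbf{P}_M - \mathbf{P}_S)\mathbf{Y}^H \mathbf{Y})$ and expanding the product produces three kinds of terms: a deterministic scalar $\|\mathbf{H}_S\|^2\,\mathbf{R}_S(\mathbf{P}_M - \mathbf{P}_S)\mathbf{R}_S^H$, two linear-in-$\mathbf{W}$ pieces that are complex conjugates of each other and sum to $2\,\mathrm{Re}\{\mathbf{H}_S^H\mathbf{W}(\mathbf{P}_M - \mathbf{P}_S)\mathbf{R}_S^H\}$, and a pure noise quadratic $\mathrm{tr}((\mathbf{P}_M - \mathbf{P}_S)\mathbf{W}^H \mathbf{W})$. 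After computing the moments I would approximate $\Lambda_{GLRT,sync} \sim \mathcal{N}(\tilde\mu,\tilde\sigma^2)$ under $\mathcal{H}_2$ (justified informally by CLT on the Wishart-type quadratic form for large $mL_p$) and substitute the LS channel estimate $\hat{\mathbf{H}}_S$ for the unknown $\mathbf{H}_S$ so that the threshold depends only on observed quantities.

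For the mean $\tilde\mu$, the two linear-in-$\mathbf{W}$ pieces have zero expectation, while $\mathbb{E}[\mathbf{W}^H\mathbf{W}] = m\sigma_w^2\mathbf{I}_{L_p}$ turns the noise quadratic into $m\sigma_w^2\,\mathrm{tr}(\mathbf{P}_M - \mathbf{P}_S)$, matching the second summand in (\ref{tmu}). The remaining deterministic scalar, after pulling out $\mathbf{H}_S^H\mathbf{H}_S = \sum_l|h_{Sl}|^2$ and the $\mathbf{H}_S\to\hat{\mathbf{H}}_S$ substitution, becomes the first summand $\tilde{\mathbf{R}}_S^H(\mathbf{P}_M-\mathbf{P}_S)\tilde{\mathbf{R}}_S\cdot\sum_l|\hat{H}_S(l)|^2$ (modulo the paper's convention $\tilde{\mathbf{R}}_S = \mathbf{R}_S^T$, which identifies the Hermitian form $\mathbf{R}_S(\mathbf{P}_M-\mathbf{P}_S)\mathbf{R}_S^H$ with the quoted expression when the pilot sequences are real, as is typical for orthogonal designs).

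For the variance $\tilde\sigma^2$, because $\mathbf{W}$ is circularly complex Gaussian the linear (odd-order) and quadratic (even-order) stochastic pieces have vanishing cross-covariance, so their variances add. The linear piece reduces to a single complex Gaussian scalar $Z = \mathbf{H}_S^H\mathbf{W}(\mathbf{P}_M - \mathbf{P}_S)\mathbf{R}_S^H$ with $\mathbb{E}|Z|^2 = \sigma_w^2\|\mathbf{H}_S\|^2\,\mathbf{R}_S(\mathbf{P}_M-\mathbf{P}_S)^2\mathbf{R}_S^H$, and $\mathrm{Var}(2\,\mathrm{Re}\,Z) = 2\,\mathbb{E}|Z|^2$ explains the $2\sigma_w^2$ prefactor in (\ref{tsigma}). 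Expanding $(\mathbf{P}_M - \mathbf{P}_S)^2 = \mathbf{P}_M + \mathbf{P}_S - \mathbf{P}_M\mathbf{P}_S - \mathbf{P}_S\mathbf{P}_M$ and using the projection identities $\mathbf{R}_S\mathbf{P}_S = \mathbf{R}_S$ and $\mathbf{P}_S\mathbf{R}_S^H = \mathbf{R}_S^H$ shows $\mathbf{R}_S\mathbf{P}_M\mathbf{P}_S\mathbf{R}_S^H = \mathbf{R}_S\mathbf{P}_S\mathbf{P}_M\mathbf{R}_S^H$, which collapses the two cross terms into $-2\mathbf{P}_M\mathbf{P}_S$, reproducing the inner Hermitian form in (\ref{tsigma}).

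The noise-quadratic variance $\mathrm{Var}(\mathrm{tr}((\mathbf{P}_M - \mathbf{P}_S)\mathbf{W}^H\mathbf{W}))$ is the one step that needs genuine calculation: applying Isserlis/Wick's theorem to the fourth-order moment $\mathbb{E}[W_{ij}W_{kl}^*W_{i'j'}W_{k'l'}^*]$ and pairing the Gaussians (the constant pairing contributes to the squared mean and cancels out) yields $m\sigma_w^4\,\mathrm{tr}((\mathbf{P}_M-\mathbf{P}_S)^2) = m\sigma_w^4\,\mathrm{tr}(\mathbf{P}_M + \mathbf{P}_S - 2\mathbf{P}_M\mathbf{P}_S)$, the second summand in (\ref{tsigma}). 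Adding the two variance contributions (after the $\hat{\mathbf{H}}_S$ substitution) gives (\ref{tsigma}), and under the Gaussian approximation $P_f \approx Q((\tau_g - \tilde\mu)/\tilde\sigma)$, so $P_f \leq \alpha$ rearranges into $\tau_g \geq \tilde\sigma Q^{-1}(\alpha) + \tilde\mu$, which is (\ref{taug}). The main obstacle is the Wick/Isserlis bookkeeping for the quartic noise moment; secondarily, reconciling the algebraically natural $\mathbf{R}_S(\mathbf{P}_M-\mathbf{P}_S)^2\mathbf{R}_S^H$ from the linear term with the symmetric-looking $\tilde{\mathbf{R}}_S^H(\mathbf{P}_M + \mathbf{P}_S - 2\mathbf{P}_M\mathbf{P}_S)\tilde{\mathbf{R}}_S$ in (\ref{tsigma}) via the projection-absorption identities noted above.
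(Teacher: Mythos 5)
Your proposal is correct and reaches exactly the formulas in (\ref{tmu})--(\ref{tsigma}); the overall strategy (moment-match the test statistic under $\mathcal H_2$, invoke a CLT-style Gaussian approximation, substitute $\hat{\mathbf H}_S$ for $\mathbf H_S$, and invert the $Q$-function) is the same as the paper's. Where you differ is in the decomposition used to organize the variance calculation. The paper writes $\Lambda_{GLRT,sync}=\tilde Z_1-\tilde Z_2$ with $\tilde Z_1=\sum_l \mathbf y_l^H\mathbf P_M\mathbf y_l$ and $\tilde Z_2=\sum_l \mathbf y_l^H\mathbf P_S\mathbf y_l$, treats these as correlated (noncentral) chi-squared quadratic forms in the signal-plus-noise rows $\mathbf y_l$, and assembles $\tilde\sigma^2=\mathrm{var}(\tilde Z_1)+\mathrm{var}(\tilde Z_2)-2\,\mathrm{cov}(\tilde Z_1,\tilde Z_2)$, outsourcing the fourth-moment bookkeeping to a cited reference. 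You instead split $\mathbf Y=\mathbf H_S\mathbf R_S+\mathbf W$ and expand the statistic into a deterministic term, a term linear in $\mathbf W$, and a pure-noise quadratic; circular symmetry kills the odd/even cross-covariance, so you never need an explicit $\mathrm{cov}(\tilde Z_1,\tilde Z_2)$, and the cross-projector terms are absorbed via $\mathbf R_S\mathbf P_S=\mathbf R_S$ to collapse $(\mathbf P_M-\mathbf P_S)^2$ into $\mathbf P_M+\mathbf P_S-2\mathbf P_M\mathbf P_S$. Your route is more self-contained (one complex-Gaussian scalar variance plus one central quadratic-form variance computed by Wick), while the paper's makes the chi-squared structure of the two competing projections explicit, which is what motivates its large-$mL_p$ Gaussian approximation; both require the same quartic Gaussian moment input in the end. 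Your side remarks are also consistent with the paper: the $\tilde{\mathbf R}_S=\mathbf R_S^T$ versus $\mathbf R_S^H$ identification indeed relies on the pilots being real (Hadamard columns in the simulations), and the paper likewise replaces $\mathbf H_S$ by its estimate only at the final threshold-setting step.
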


\begin{proof}
See  Appendix A.
\end{proof}

The computation of the test statistic for the ML classifier (and the GLRT detector)  in a closed-form is difficult in the asynchronous case  since it is difficult to obtain  a closed-form solution  to   (\ref{eq2b_MIMO}).  However, the ML classifier   has a simple closed-form expression in the synchronous case. In both cases, (asynchronous/synchronous) the performance of the classifier   depends on the ML estimates of unknown parameters which depend on the length of the pilot sequences. In the   following,  we consider  statistics for a suboptimal detector,  known to be  correlation detector,   to perform MIMO system  classification  where estimation of all the unknown parameters is not required.

\subsection{Correlation Classifier}
For  a single-input single output (SISO) link with AWGN channels,  ML estimation of the delay involves the maximization of  the correlation between the received signal and the transmitted signal (however, with an unknown channel matrix, this optimality may  no longer holds). Motivated by this, we expect that for MIMO system as considered in this paper, the correlation  between the received signal matrix and the transmitted pilot symbols  will provide a reasonable  decision statistic for the classification  problem in (\ref{mult_hyp_timing}) in the presence of unknown delay.
In the following,  we consider correlation classifier for the multiple hypothesis testing problem which does not require the estimation of $\mathbf H_j$. For the asynchronous case, the correlation classifier is given by,
\begin{eqnarray}
\hat j = \underset{j}{\arg\max}~ \mathrm{tr}(\mathbf Y_{\hat{\tau}_j} (\mathbf R_{j}^{H} \mathbf R_{j}) \mathbf Y_{\hat{\tau}_j}^H) \label{corre_asynch}
\end{eqnarray}
where
\begin{eqnarray}
\hat{\tau}_j = \underset{\tau_j}{\arg\max} ||\mathbf C_{\tau_j}||_F^2\label{hat_tau_M}
\end{eqnarray}
and $\mathbf C_{\tau_j} =\mathbf Y_{\tau_j} \mathbf R_j^H $.
 It is noted that,   this decision statistic in  (\ref{corre_asynch}) is fairly easy to compute.
For the synchronous case, (\ref{corre_asynch}) reduces to,
\begin{eqnarray*}
\hat j = \underset{j}{\arg\max} ~\mathrm{tr}(\mathbf Y (\mathbf R_{j}^{H} \mathbf R_{j}) \mathbf Y^H).
\end{eqnarray*}

\section{Tx System  Classification When  the Exact Knowledge of Pilot Data is Not Available}\label{sec_UnknownPilot}
For both  ML and the correlation classifiers  as considered in Section \ref{sec_knownPilot} (in both asynchronous and synchronous cases), the exact  knowledge of the pilot data assigned to each Tx antenna is assumed to be available at the receiver. However, when  the cooperation between the transmitter and the receiver is too limited, this assumption may be too restrictive. Next, we consider  the case where the receiver has only probabilistic information regarding the pilot sequences assigned to each Tx antenna.  First,  we consider the asynchronous case.

Let each Tx antenna use a  pilot sequence $\mathbf x_i=[x_i[1], \cdots, x_i[L_s]]^T$  for symbol timing synchronization  drawn as a column of a given orthogonal  matrix $\mathbf  Q$ where $\mathbf Q$ is known to the receiver. However, the receiver is not aware of the exact column assigned to a given  antenna from $\mathbf Q$. Thus, the receiver assumes that any column of  $\mathbf Q$ is assigned to $\mathbf x_i$ randomly with the same probability for given $i$.

Let
\begin{eqnarray}
\mathbf X= \left[
\begin{array}{cc}
\mathbf x_1^T\\
 \cdot \\
 \mathbf x_n^T
 \end{array}\right]\label{X_unknown}
  \end{eqnarray}
be the  $n\times L_s$ matrix containing all the different  pilot sequences used by $n$ antennas. Then,  there can be $L_s^n = \frac{L_s!}{(L_s -n)!}$ possibilities for $\mathbf X$. The $k$-th realization of $\mathbf X$ is denoted by $\mathbf X_k$. The joint pdf of $\mathbf Y_{\tau}$ marginalized over $\mathbf X_k$ under $\mathcal H_j$ is given by,
\begin{eqnarray}
p(\mathbf Y_{\tau_j} | \mathbf H_j, \tau_j; \mathcal H_j) = \underset{\mathbf X_k}{\sum} p(\mathbf Y_{\tau_j} | \mathbf H_j, \tau_j, \mathbf X_k; \mathcal H_j) p(\mathbf X_k)\label{marg_pdf_MIMO}
\end{eqnarray}
where $p(\mathbf X_k)$ is the probability that $\mathbf X_k$ being selected. When $\mathbf X_k$ is selected uniformly, we have $p(\mathbf X_k) = \frac{1}{L_s^n}$. In order to perform likelihood ratio based classification, we consider the  hybrid maximum likelihood (HML) based approach, where the unknown parameters  are  estimated so that the marginalized pdf  under each hypothesis is maximized.    Since finding these estimators is computationally intractable due to marginalization, we provide a numerical technique based on the expectation-maximization (EM) algorithm.

\subsection{HML Based MIMO Antenna Classification via EM}\label{HML_EM_Async}
The EM algorithm is an iterative
numerical method  which can be used to compute ML estimates. It is well suited when  ML estimation
is intractable due to the presence of hidden (unobserved) data. The outline of the  EM algorithm is given in \cite{rubin_jrs77}, and the use of the EM algorithm in HML based classification is discussed in our previous work in  \cite{Wimalajeewa_amc_14} in a different application scenario.   For the problem addressed in this paper, the actual pilot sequences $\mathbf  X$ can be treated as hidden
data. Then, \emph{complete data} can be expressed as $[\mathbf Y, \mathbf X]$.  Starting from initial estimates $[\hat{\mathbf H}_j^{(0)}, \hat{\tau}_j^{(0)}]$, the  two operations as in (\ref{EM_twstep}) are performed at the  $r$-th iteration under $\mathcal H_j$.
\begin{figure*}
\begin{eqnarray}
\mathrm{\bf E-Step}:g(\mathbf H_j, \tau_j | \hat{\mathbf H}_j^{(r)}, \hat{\tau}_j^{(r)}) &=& \underset{\mathbf X} {\sum}\log p(\mathbf Y_{\tau_j} | \mathbf X, \mathbf H_j) p(\mathbf X |  \mathbf Y_{\hat{\tau}_j^{(r)}}, \hat{\mathbf H}_j^{(r)})\nonumber\\
\mathrm{\bf M-Step}: \{ \hat{\mathbf H}_j^{(r+1)}, \hat{\tau}_j^{r+1}\} &=& \underset{\mathbf H_j, \tau_j} {\arg\max} ~ g(\mathbf H_j, \tau_j | \hat{\mathbf H}_j^{(r)}, \hat{\tau}_j^{(r)}).\label{EM_twstep}
\end{eqnarray}
\end{figure*}

Let $\alpha_l^{(r)} = p(\mathbf X = \mathbf X_l | \mathbf Y_{\hat{\tau}_j^{(r)}},   \hat{\mathbf H}_j^{(r)})$ which can be expressed as,
\begin{eqnarray*}
\alpha_l^{(r)} = \frac{p(\mathbf Y_{\hat{\tau}_j^{(r)}} |  \hat{\mathbf H}_j^{(r)}, \mathbf X = \mathbf X_l)}{ \sum_{k=1}^{L_s^n} p(\mathbf Y_{\hat{\tau}_j^{(r)}} |  \hat{\mathbf H}_j^{(r)}, \mathbf X = \mathbf X_k) }
\end{eqnarray*}
where  $\left(\mathbf Y_{\hat{\tau}_j^{(r)}}\right)_{jk}=\int_{T_0} y_j(t) g(t-kT_s -\hat{\tau}_j^{(r)})$. Then,  the estimates  $\hat{\mathbf H}_j^{r+1}$ and $\hat{\tau}_j^{r+1}$ at the $(r+1)$-th iteration can be found as the solution for $\mathbf H_j$ and $\tau_j$ in the following two equations:
\begin{eqnarray*}
\mathbf H_j = \mathbf Y_{\tau_j}\left(\sum_{l=1}^{L_s^n} \alpha_l^{(r)} \beta_p\mathbf X_l^H\right)\left(\sum_{l=1}^{L_s^n} \alpha_l^{(r)} \beta_p^2\mathbf X_{l}\mathbf X_l^H\right)^{-1}
\end{eqnarray*}
and
\begin{eqnarray*}
 \sum_{l=1}^{L_s^n} \alpha_l^{(r)} \underset{i,k}{\sum}(\beta_p\mathbf H_{j} \mathbf X_l)_{i,k} \frac{\partial y_i^{\tau_j}[k]}{\partial \tau_j} = 0
\end{eqnarray*}
where $ y_i^{\tau_j}[k] = \int_{T_0} y_j(t) g(t-kT_s -\tau_j) dt$.
Once the unknown parameters  under both hypotheses are found via the EM algorithm, the HML based  asynchronous MIMO Tx system  classifier  is given by,
\begin{eqnarray}
\hat j =\underset{1\leq j\leq n_{\max}}{\arg\max}{p(\mathbf Y_{\hat{\tau}_j} | \hat{\mathbf H}_j, \hat{\tau}_j; \mathcal H_j)}\label{HML_classifier}
\end{eqnarray}
where $\hat{\tau}_j$ and  $ \hat{\mathbf H}_j$, are the estimates found by the EM algorithm for each $j$.

In the synchronous case,  the EM algorithm is implemented  to estimate  only  $\mathbf H_j$ under  $\mathcal H_j$. More specifically, under $\mathcal H_j$, the estimate of $\mathbf H_j$ at the $r$-th iteration is found such that,
\begin{eqnarray*}
\hat{\mathbf H}_j^{(r+1)} = \mathbf Y\left(\sum_{l=1}^{L_p^n} \alpha_l^{(r)}\beta_p \mathbf X_l^H\right)\left(\sum_{l=1}^{L_p^n} \alpha_l^{(r)}\beta_p^2 \mathbf X_{l}\mathbf X_l^H\right)^{-1}
\end{eqnarray*}
where
\begin{eqnarray*}
\alpha_l^{(r)} = \frac{p(\mathbf Y |  \hat{\mathbf H}_j^{(r)}, \mathbf X = \mathbf X_l)}{ \sum_{k=1}^{L_p^n} p(\mathbf Y |  \hat{\mathbf H}_j^{(r)}, \mathbf X = \mathbf X_k) }.
\end{eqnarray*}
Then, a test statistic is found similar to that in (\ref{HML_classifier}) letting $\tau_j=0$. Compared to performing HML  in the asynchronous case, it is noted that the computation of MLEs via EM algorithm for the synchronous case is fairly easy since at each iteration, the unknown parameters can be found in a closed-form.


\begin{figure*}
\centering
\subfigure[$4\times 2$ MIMO vs. SIMO, $\mathrm{SNR}=0 ~dB$]{%
\includegraphics[width=0.4\textwidth,height=!]{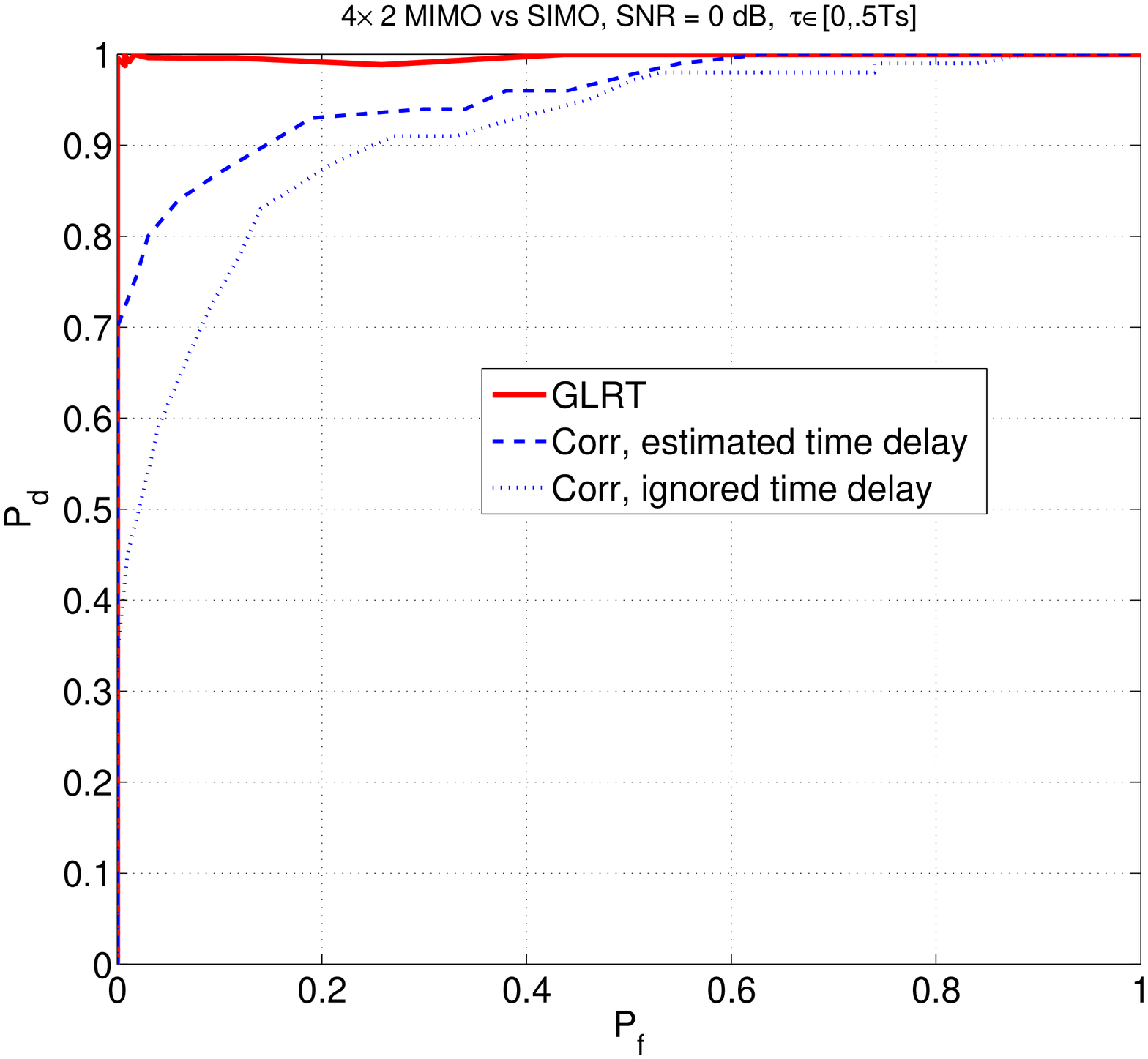}
}
\quad
\subfigure[$4\times 4$ MIMO vs. SIMO, $\mathrm{SNR}=0~dB$]{%
\includegraphics[width=0.4\textwidth,height=!]{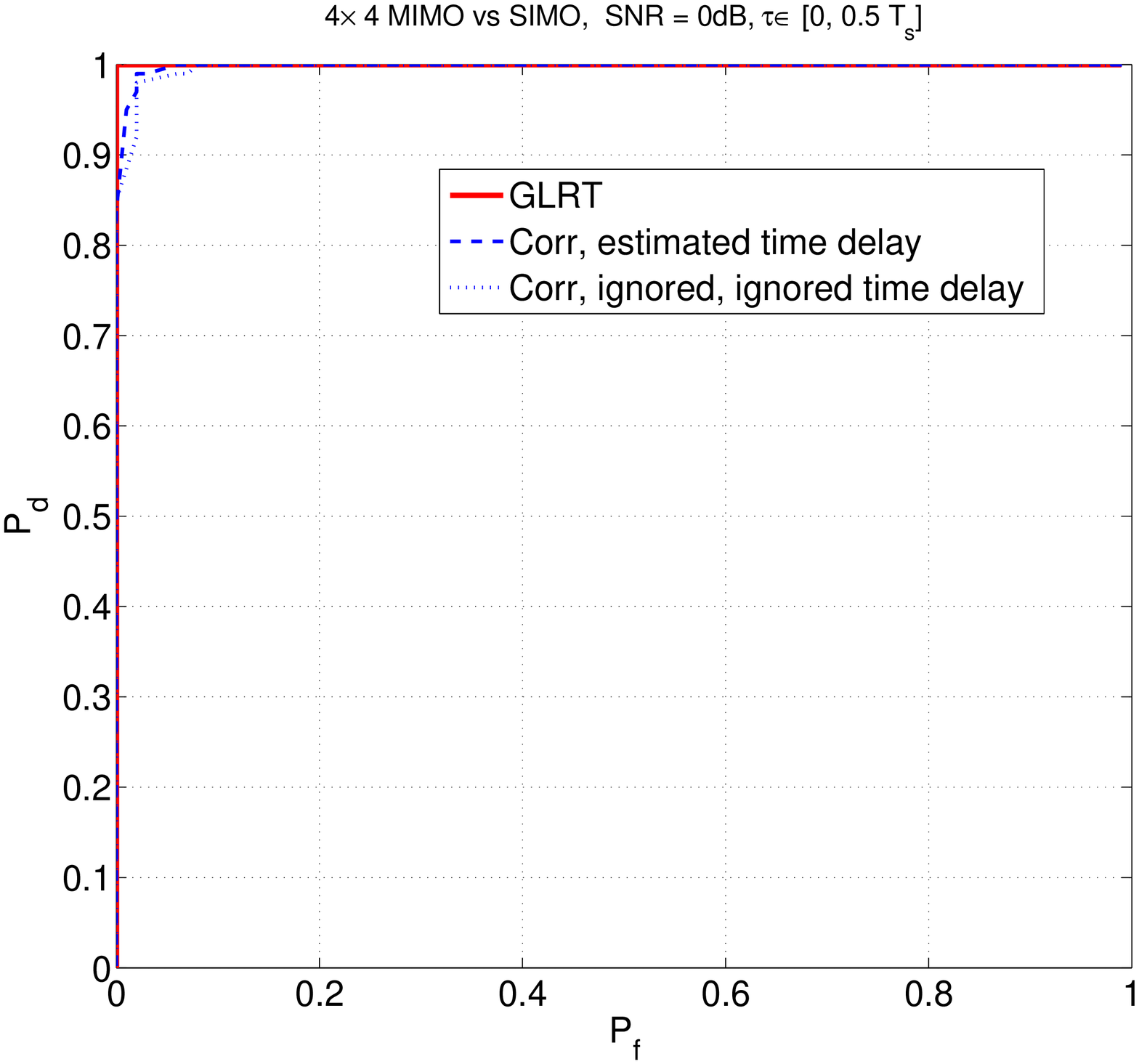}
}
\subfigure[$4\times 2$ MIMO vs. SIMO, $\mathrm{SNR}=-10~dB$]{%
\includegraphics[width=0.4\textwidth,height=!]{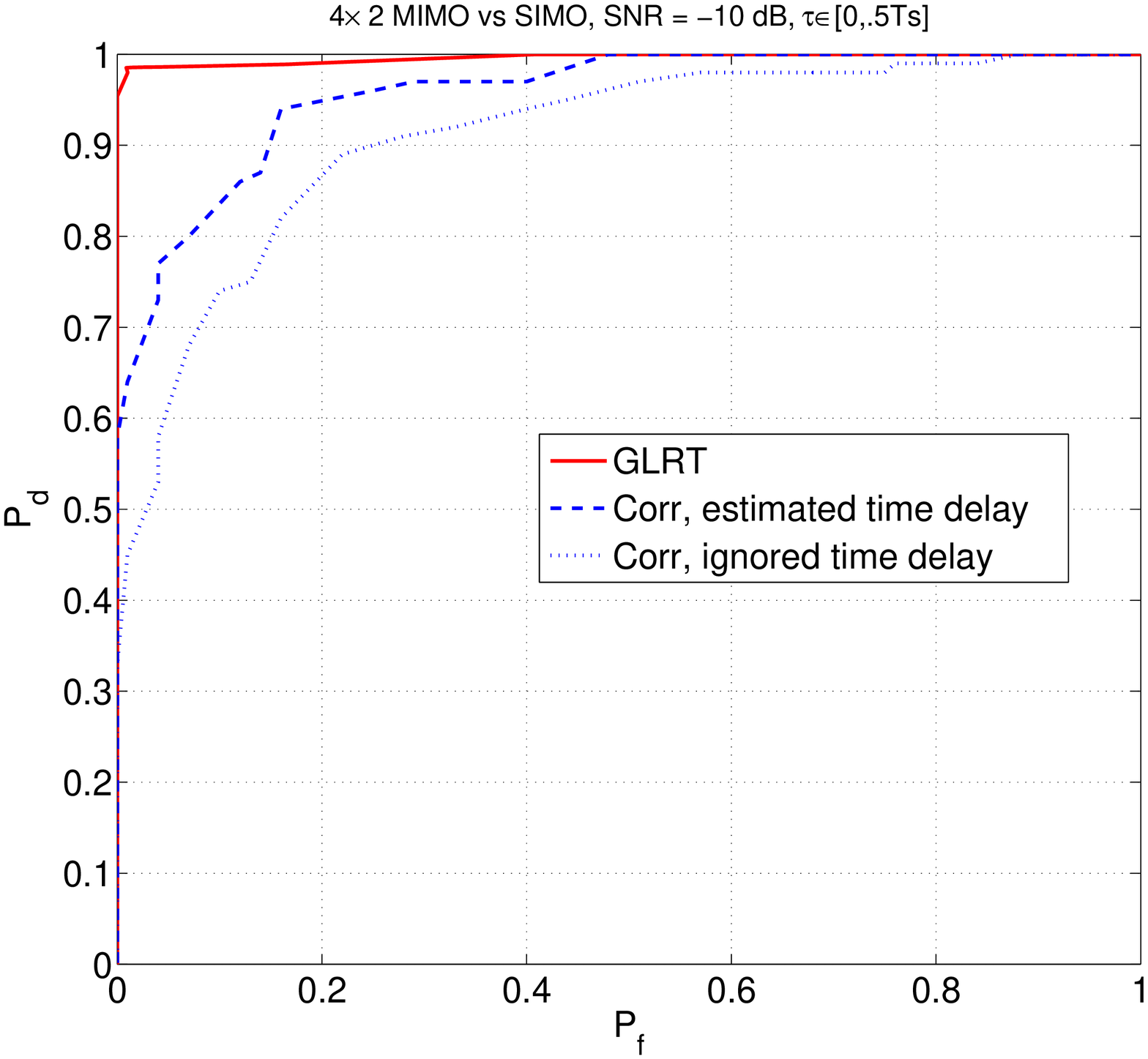}
}
\quad
\subfigure[$4\times 4$ MIMO vs. SIMO, $\mathrm{SNR}=-10~dB$]{%
\includegraphics[width=0.4\textwidth,height=!]{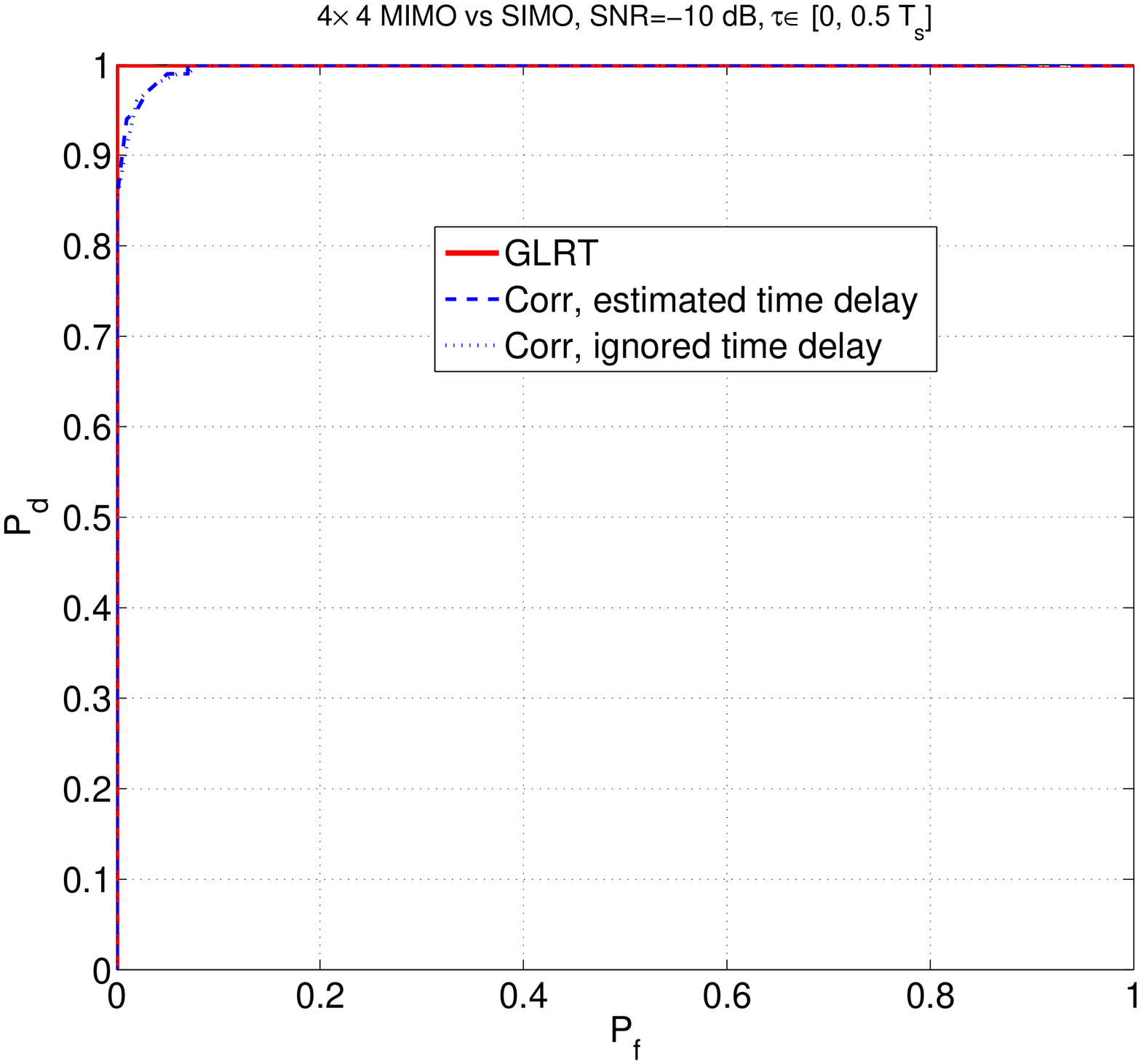}
}
%
\caption{ROC curves for asynchronous MIMO vs. SIMO classification via GLRT and  correlation detectors using  pilot data for symbol timing; $\tau_j$ for $j=1,2$ are selected uniformly randomly from $[0,\frac{1}{2}T_s]$ where $T_s =1$, $L_s=8$}
\label{fig_Pd_snr_timing}
\end{figure*}


\section{Simulation  Results}\label{simulation}
In this section, we provide  numerical results to illustrate the performance of  MIMO system  classification schemes developed in this paper. We assume that there is a maximum of $n=4$ Tx antennas.  The pilot  sequences are chosen to be the columns of $L_s\times L_s$  ($L_p\times L_p$) Hadamard matrix for symbol timing  (channel) estimation.
In all the simulations, the following values for specific parameters are used unless otherwise specified. The transmitted pulse $g(t)$ is assumed to be symmetrically truncated with a roll-off factor $\epsilon =0.3$ and duration $6T_s$. We let $\beta_p=1$, and the pilot sequences with multiple antenna Tx are normalized so that both multiple antenna Tx and single antenna Tx have the same transmit power. The symbol duration $T_s=1$.

\subsection{Performance of Asynchronous  MIMO vs. SIMO Detection }
First, we show the classification performance when the  Tx system is binary; i.e., the problem is to detect  SIMO vs. MIMO.
In Fig. \ref{fig_Pd_snr_timing}, we plot receiver  operating characteristic (ROC) curves to depict the performance of asynchronous GLRT and asynchronous correlation based  MIMO vs. SIMO detection  using  pilot data  used for symbol timing synchronization.  For  the ROC curves, we plot the probability of detection $P_d$ vs. probability of false alarms $P_f$ which are defined below:
 \begin{eqnarray*}
 P_f = Pr (\hat{\mathcal H} = \mathcal H_1| \mathcal H_2), ~~  \mathrm{and} ~~P_d = Pr (\hat{\mathcal H} = \mathcal H_1| \mathcal H_1)
 \end{eqnarray*}
where $\mathcal H_1$ and $\mathcal H_2$ are as defined in (\ref{hyp_MIMO_SIMO_timing}) and  $\hat{\mathcal H}$  denotes  the detection decision. For a given signal-to-noise ratio (SNR), the ROC curves are obtained by varying the threshold and $500$ monte carlo runs were used to generate each curve.  We assume in Fig. \ref{fig_Pd_snr_timing} that the pilot sequence assigned to each Tx antenna is   known at the receiver.  For an  $m\times n$ MIMO system, the first $n$ columns of the Hadamard matrix  are selected.
 In the correlation detector, the delay is estimated as in (\ref{hat_tau_M}) under each  hypothesis.   We further plot the performance of the correlation detector  when the timing delay is ignored; i.e., with the  assumption that  $\hat{\tau}_j=0$ for $j=1,2$. Different sub plots in Fig. \ref{fig_Pd_snr_timing} are for different SNR values and antenna systems. The SNR is defined as $10\log_{10}\left(\frac{||\mathbf H\mathbf R||_F^2}{\mathbb E\{||\mathbf W||_F^2\}}\right)$. As expected, it can be seen that,  the GLRT detector performs better than the correlation detector, especially when the SNR is low and the number of antennas used for MIMO is small. Further, $4\times 4$ MIMO vs. SIMO has a better classification performance  compared to $4\times 2$ MIMO vs. SIMO with both types of detectors.  The correlation detector, which has a simpler implementation compared to GLRT in the asynchronous case,  is more promising  in classifying $4\times 4$ MIMO vs. SIMO compared to classifying $4\times 2$ MIMO vs. SIMO. Further, with $4\times 2$ MIMO, the performance of the correlation detector degrades quite significantly when the time delay is ignored, compared to that with $4\times 4$ MIMO.    When the SNR is changed (e.g  from $0~dB$ to $- 10~dB$), a significant performance degradation is not observed.

  In Fig. \ref{fig_corr_est_ig}, we show the performance of the correlation detector  with ignored time delay  (which has a simpler implementation than that with estimated time delay) as  the  number of Tx antennas for MIMO varies  for a given SNR. We further show the performance with the estimated time delay.
  It can be observed that,  as the number of Tx antennas in MIMO increases, the performance of the correlation detector with the ignored time delay becomes very close to that with the estimated time delay.
\begin{figure}
\centerline{\epsfig{figure=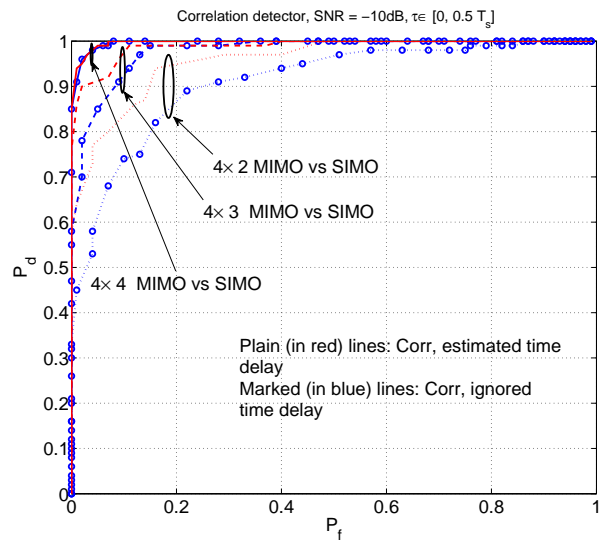,width=9.0cm}}
\caption{ROC curves for asynchronous MIMO vs. SIMO classification via  correlation detector with estimated and ignored time delay; $\tau_j$ for $j=1,2$ are selected uniformly randomly from $[0,\frac{1}{2}T_s]$ where $T_s =1$, $L_s=8$,  $\mathrm{SNR}=-10~dB$}\label{fig_corr_est_ig}
\end{figure}

\begin{figure*}
\centering
\subfigure[$L_s=8$]{%
\includegraphics[width=0.45\textwidth,height=!]{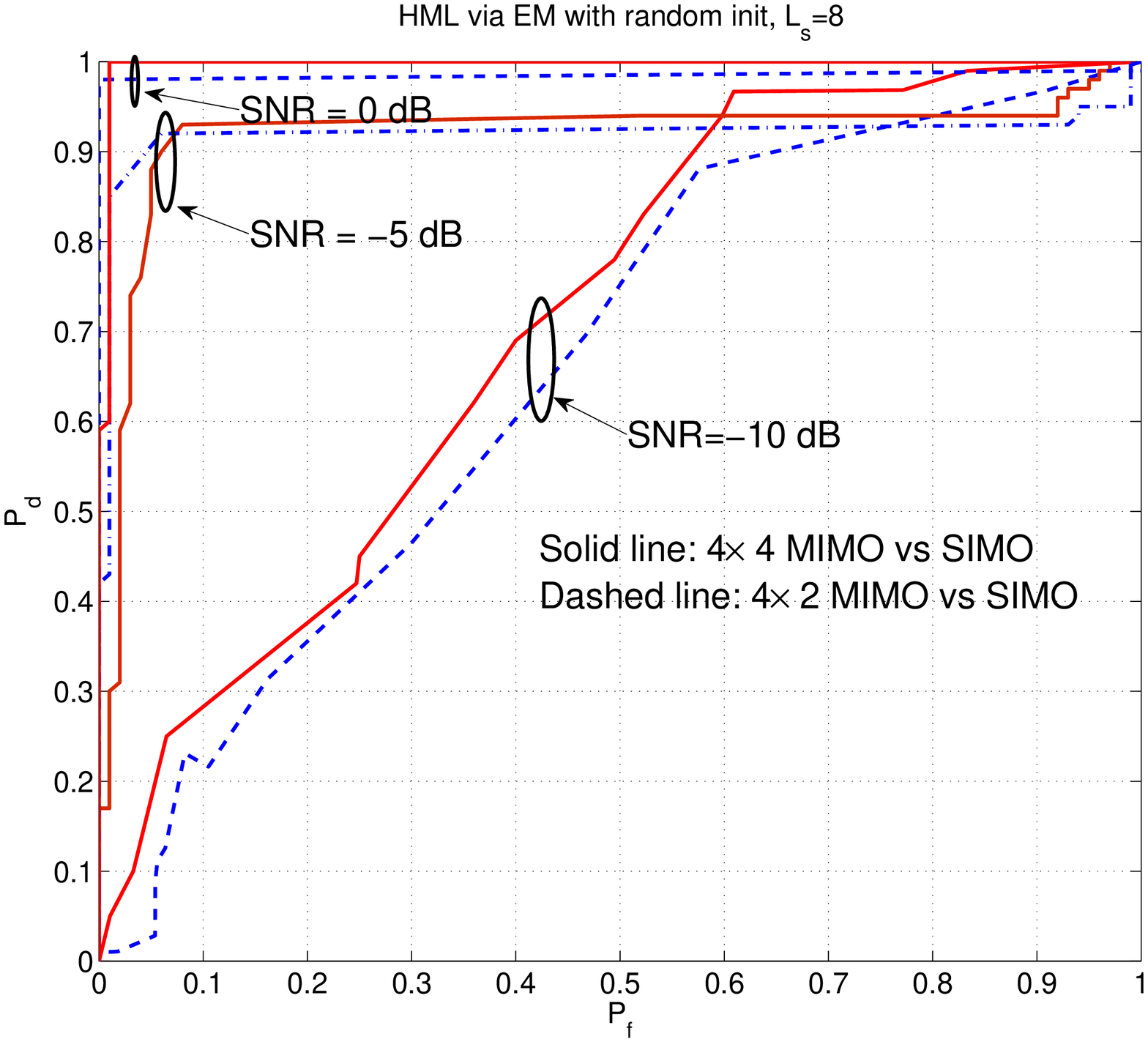}
}
\quad
\subfigure[$L_s=4$]{%
\includegraphics[width=0.45\textwidth,height=!]{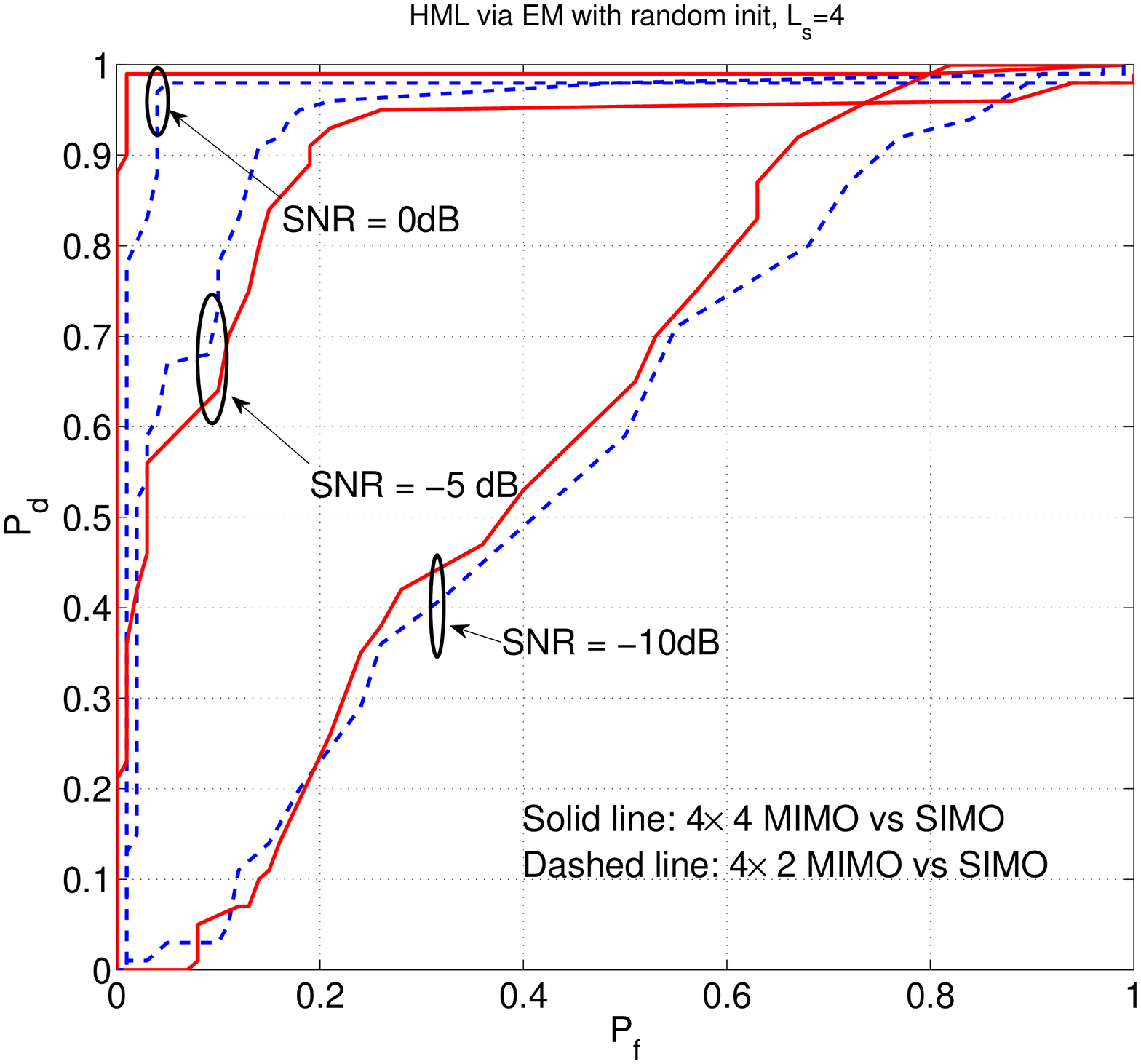}
}

%
\caption{ROC curves for asynchronous MIMO/SIMO classification via HML using  pilot data for symbol timing; $\tau_j$'s for $j=1,2$ are selected uniformly randomly from $[0,\frac{1}{2}T_s]$ where $T_s =1$, random initialization is used for the EM algorithm}
\label{fig_HML_timing}
\end{figure*}

In Fig. \ref{fig_HML_timing}, we illustrate the performance of the HML based classification scheme when the exact knowledge of the pilot sequences is  not available at the receiver. The EM algorithm as discussed in Subsection \ref{HML_EM_Async} is implemented to get the ML estimates of $\mathbf H_j$, and  $\tau_j$
for $j=1,2$. It is well known that the performance of the EM algorithm depends on the initial values of the unknown parameters. In this work, we consider that  the initial values are selected randomly. For $\tau_j$, the initial values are obtained uniformly from $[0, 0.5Ts]$ while for $\mathbf H_j$, the initial values are drawn from a complex normal distribution for $j=1,2$.
  In Fig. \ref{fig_HML_timing}, MIMO vs. SIMO detection  performance with asynchronous HML is plotted in terms of ROC curves. Two subplots correspond to different values for  $L_s$. It can be seen that when the SNR is low,  MIMO vs. SIMO classification has relatively poor performance with the HML algorithm. However, as the SNR increases, the performance of MIMO/SIMO classification improves for both $n=2$ and $n=4$. Further, it is observed that when $L_s$ is small, HML based $4\times 2$ MIMO vs. SIMO performs better (in a small margin) than $4\times 4$ MIMO vs. SIMO for some values of SNR.  It appears that EM algorithm is not capable of finding  good estimates for the  unknown parameters  when $L_s$ is very small. However, as SNR increases, both HML  based $4\times 4$ MIMO vs. SIMO and $4\times 2$ MIMO vs. SIMO detectors  reach almost the perfect detection region.

\subsection{Performance of Synchronous  MIMO vs. SIMO Detection}
In the synchronous case,  only channel matrices are unknown.  In Figures \ref{fig_ROC}-\ref{fig_Pd_snr}, we assume that the exact knowledge of the pilot sequences assigned for each antenna for channel estimation is known at the receiver. In Fig. \ref{fig_ROC}, ROC curves are plotted to illustrate  the performance of the synchronous  GLRT and synchronous correlation  detectors for different $L_p$ and SNR values for MIMO vs. SIMO detection.   As observed in the asynchronous case, it is seen that  GLRT performs better than the correlation detector and this performance gain is more significant as $n$ decreases.   It is noted that in the synchronous case, the correlation detector is pretty simple since it does not require the estimation of any unknown parameters while GLRT estimates the channel matrix.

\begin{figure*}
\centering
\subfigure[$\sigma_w^2=10$, $L_p = 8$, $\mathrm{SNR}=-10~dB$]{%
\includegraphics[width=0.45\textwidth,height=!]{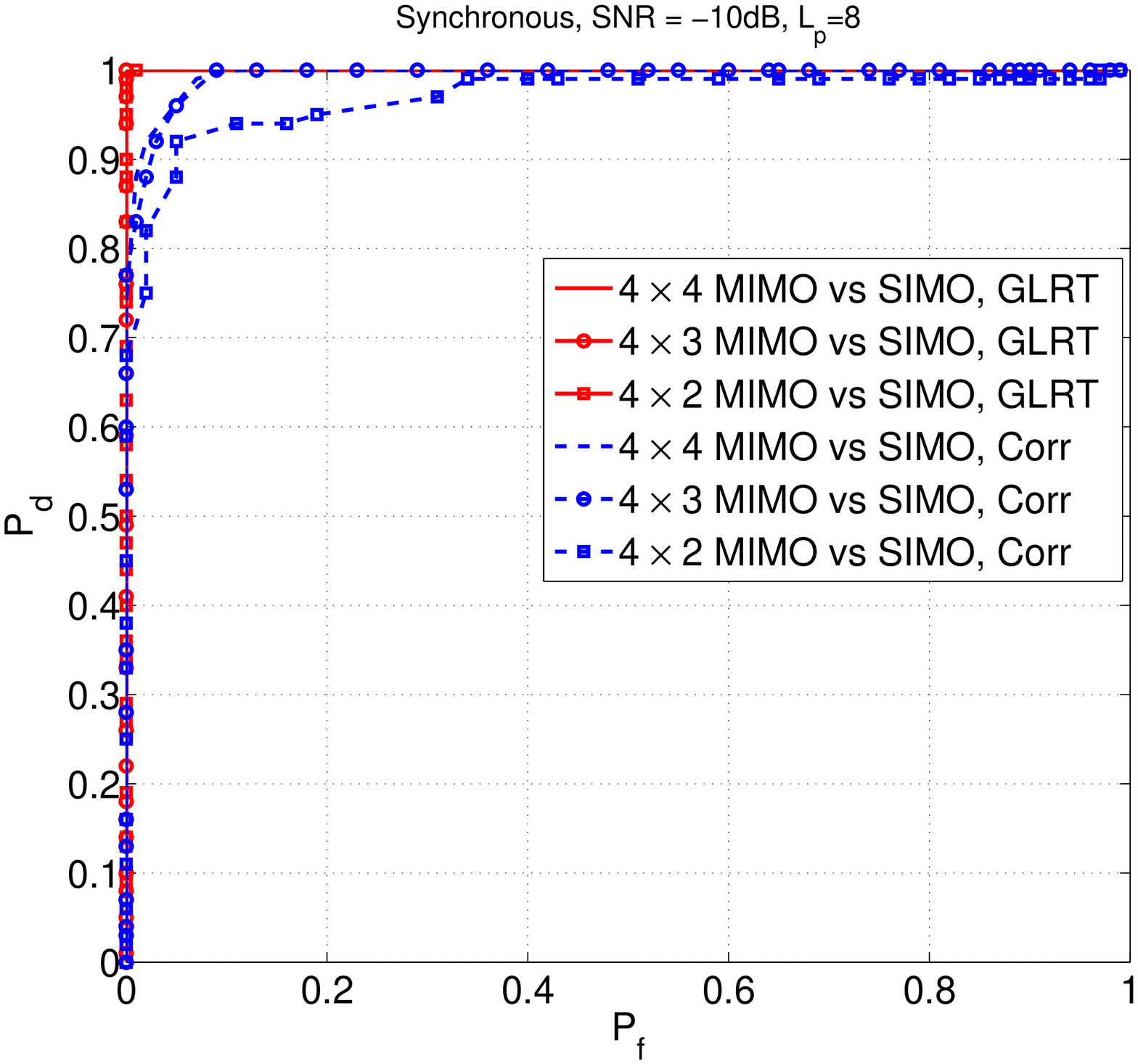}
}
\quad
\subfigure[$\sigma_w^2=10$, $L_p=4$, $\mathrm{SNR}=-10~dB$]{%
\includegraphics[width=0.45\textwidth,height=!]{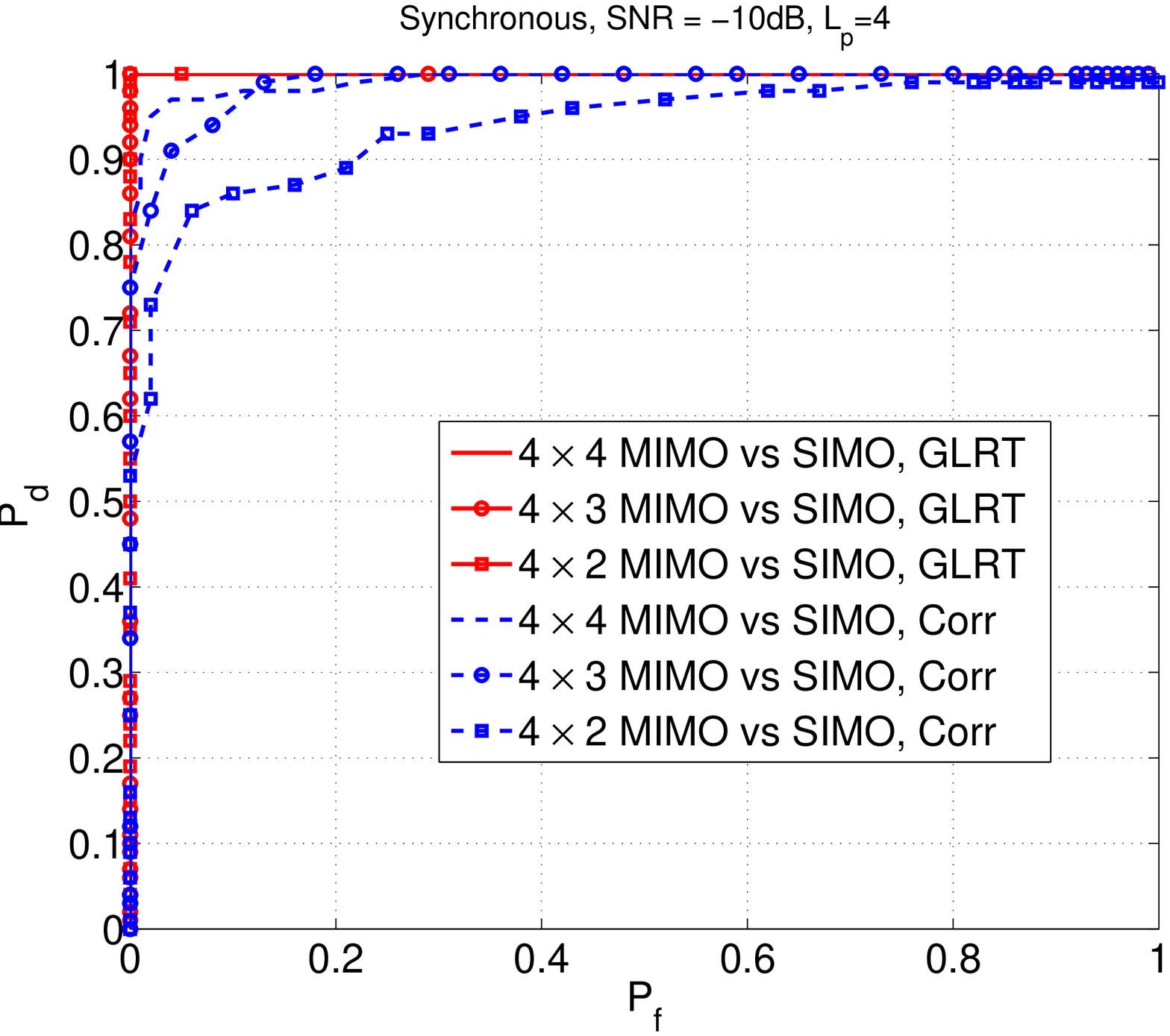}
}
%
\caption{ROC curves for synchronous MIMO/SIMO with  pilot data used  for channel estimation; $T_s=1$}
\label{fig_ROC}
\end{figure*}

In Fig. \ref{fig_sync_async}, we compare the performance of asynchronous and synchronous classification schemes when the exact pilot data is available at the receiver. We consider $4\times 2$ MIMO vs. SIMO classification and for both synchronous and asynchronous cases, we assume that the pilot sequences are of the same length (i.e. $L_s=L_p=8$). Further,  we let $\mathrm{SNR} = -10 ~dB$. It can be seen that, the GLRT detector for both synchronous and asynchronous cases  provides similar performance while the synchronous correlation detector performs  better than the asynchronous correlation detector with a considerable performance gain. It is noted that, the implementation of the asynchronous correlation detector with the estimated time delay is more computationally complex than the synchronous correlation detector. With the asynchronous correlation detector with ignored time delay (which has a similar implementation complexity as that of   the synchronous correlation detector), a significant performance loss can be observed compared to the synchronous correlation detector. Thus, neglecting the unknown time delay leads  to poor MIMO/SIMO classification performance especially  when $n$ is small. However, as seen in Fig. \ref{fig_corr_est_ig}, the  correlation detector with ignored time delay is promising when the  number of Tx antennas is not too small.

\begin{figure}
\centerline{\epsfig{figure=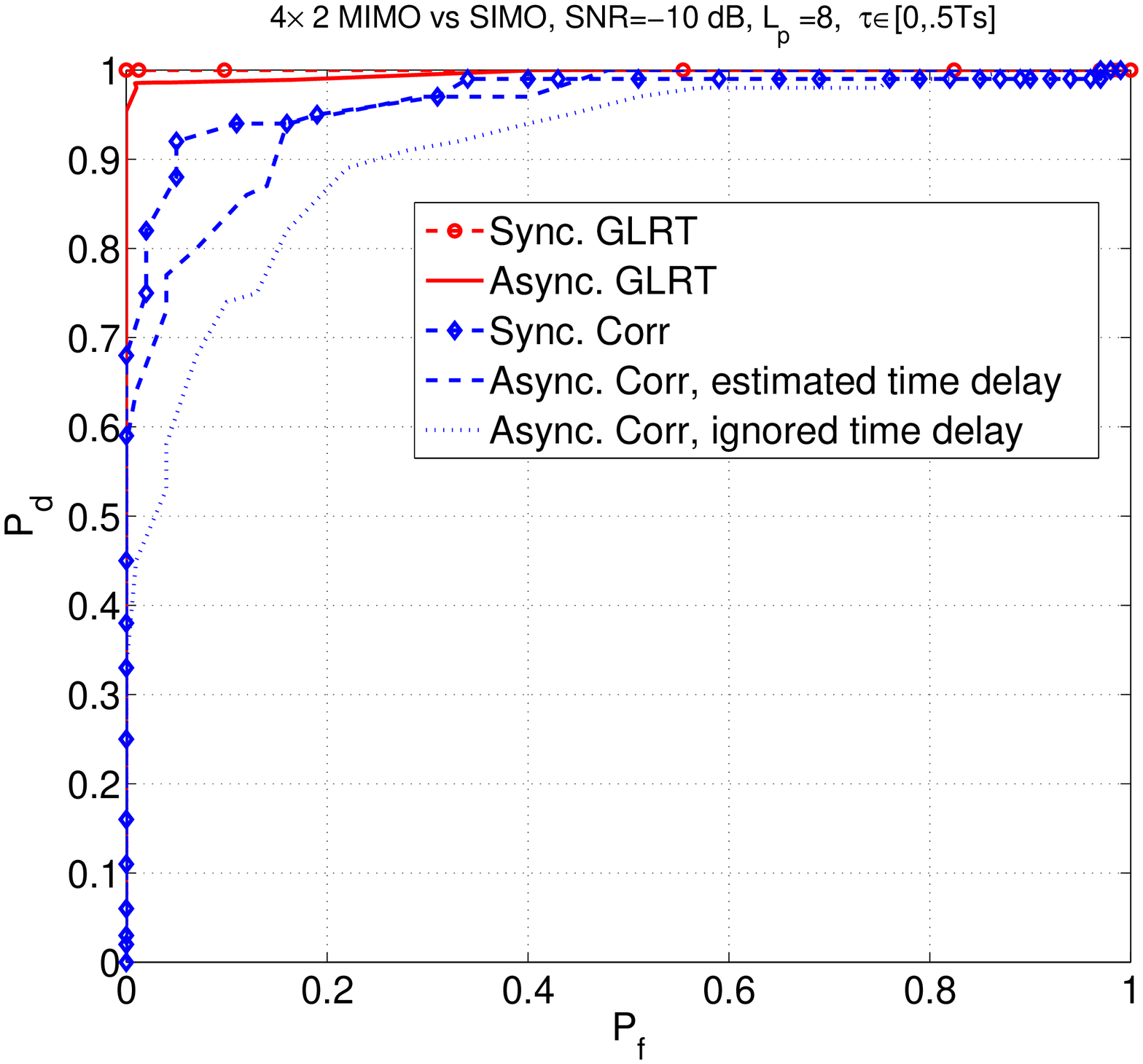,width=9.50cm}}
\caption{Asynchronous and synchronous MIMO/SIMO classification using GLRT  and correlation detectors; $L_p=8$, $\sigma_w^2=10$, $\mathrm{SNR}=-10dB$}\label{fig_sync_async}
\end{figure}

\begin{figure*}
\centering
\subfigure[$\mathrm{SNR}=0~dB$, $L=104$, $L_p=8$, SIMO vs. $4\times 2$ MIMO]{%
\includegraphics[width=0.45\textwidth,height=!]{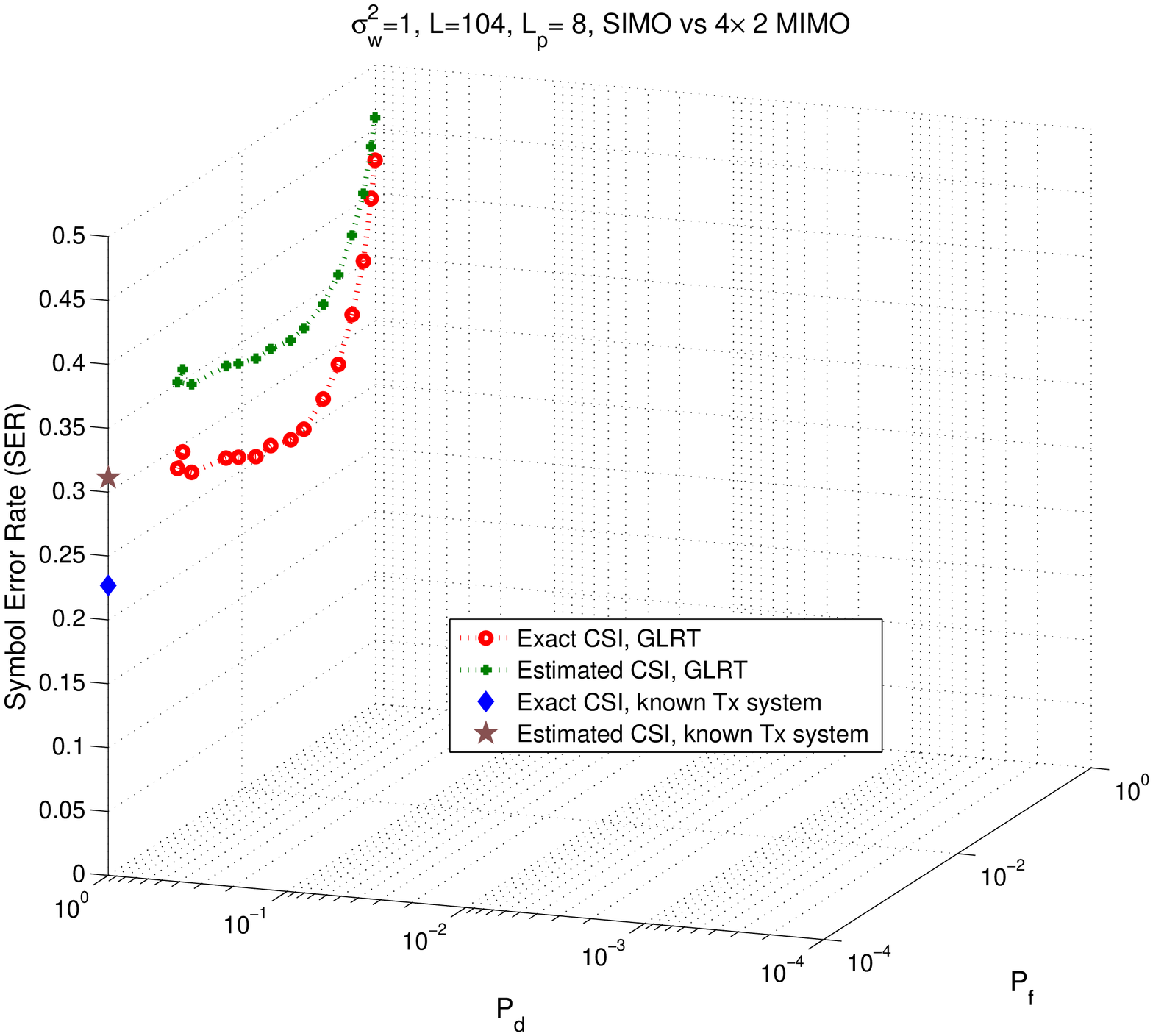}
}
\quad
\subfigure[$\mathrm{SNR}=5~dB$, $L=104$, $L_p=8$, SIMO vs. $4\times 2$ MIMO]{%
\includegraphics[width=0.45\textwidth,height=!]{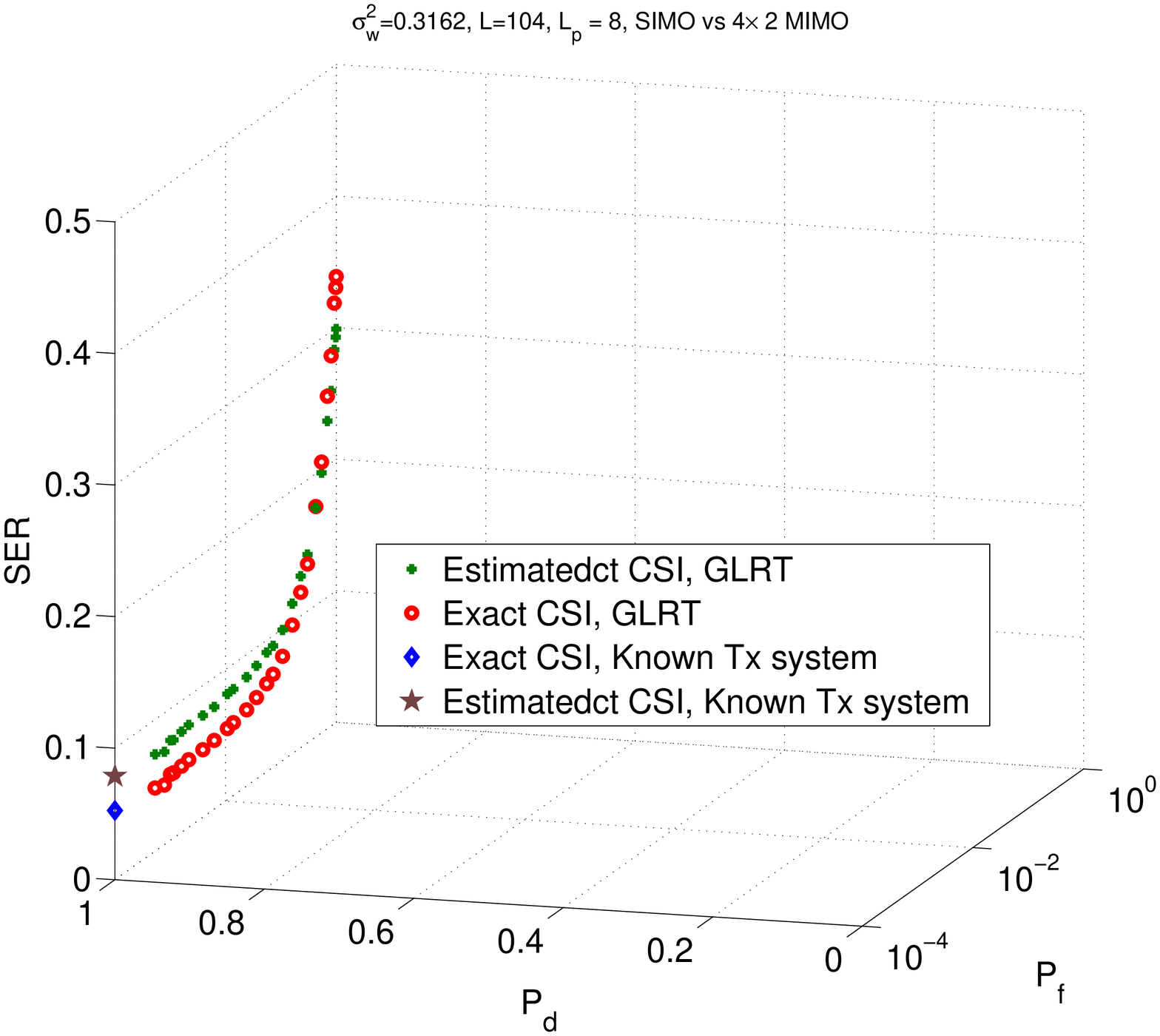}
}
\subfigure[$\mathrm{SNR}=0~dB$, $L=104$, $L_p=8$, SIMO vs. $4\times 4$ MIMO]{%
\includegraphics[width=0.45\textwidth,height=!]{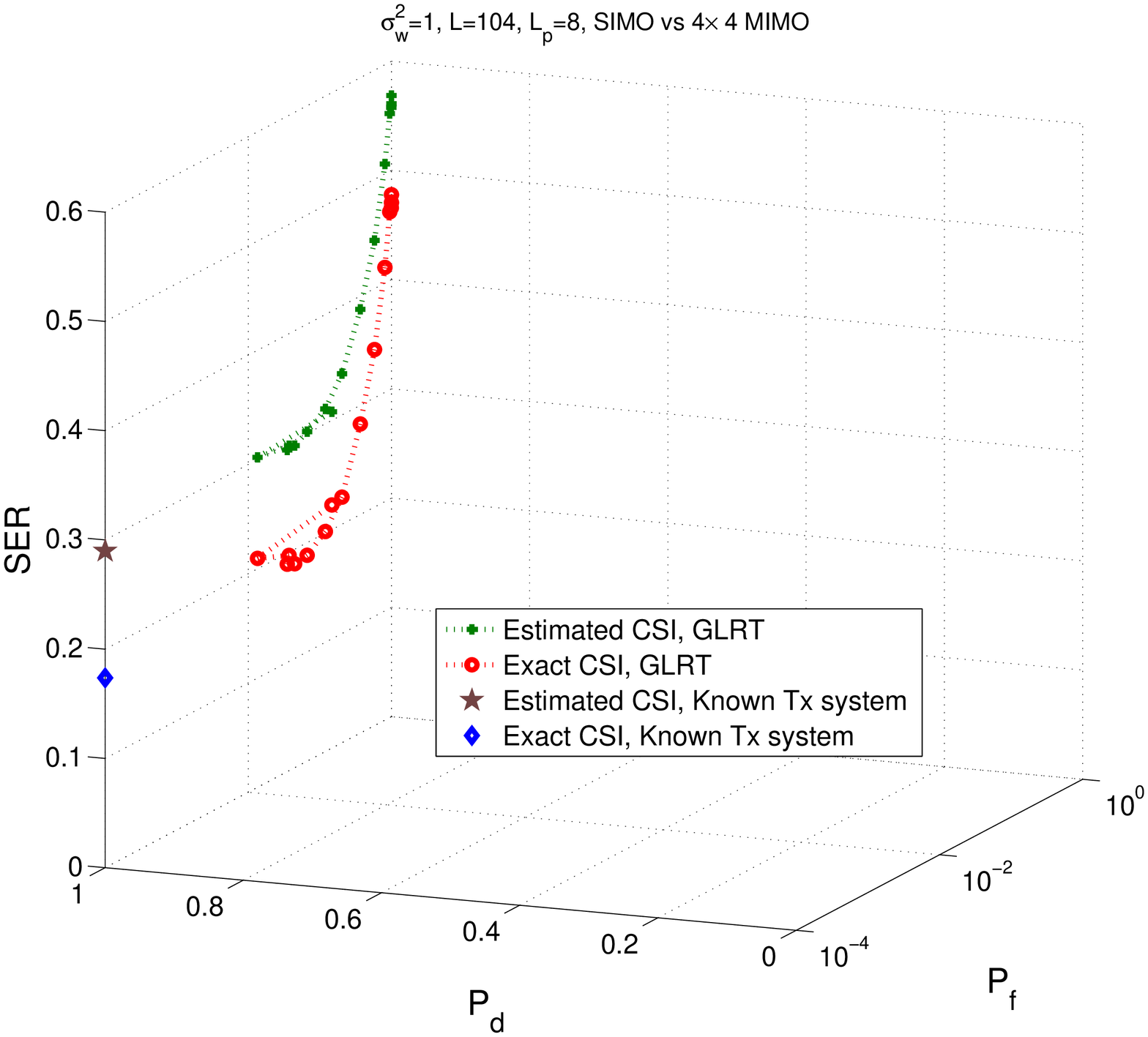}
}
\quad
\subfigure[$\mathrm{SNR}=5~dB$, $L=104$, $L_p=8$, SIMO vs. $4\times 4$ MIMO]{%
\includegraphics[width=0.45\textwidth,height=!]{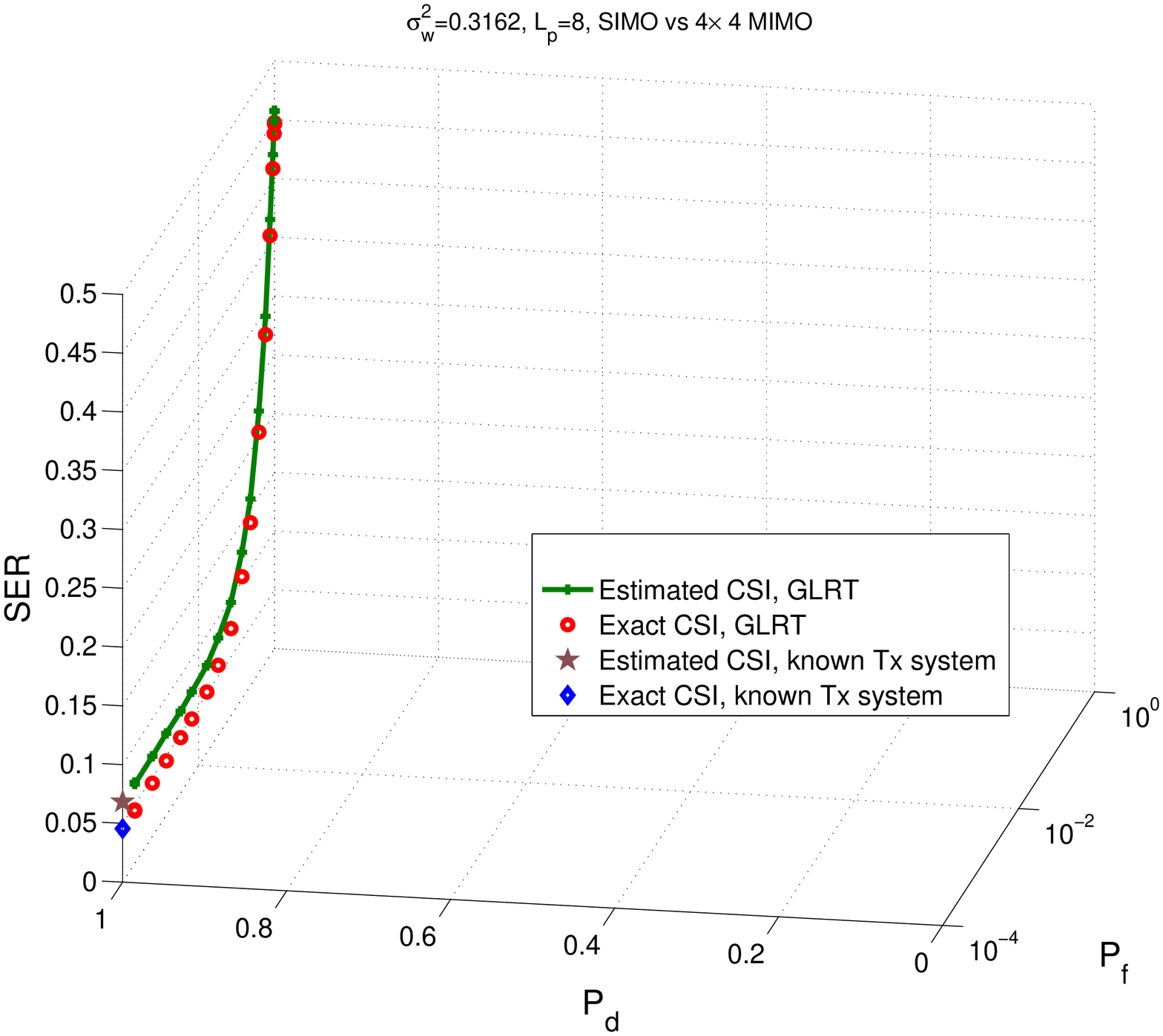}
}
%
\caption{SER when data is recovered via ML after detection  decision is made via GLRT with  pilot data used for channel estimation}
\label{fig_Detection}
\end{figure*}

\begin{figure}
\centerline{\epsfig{figure=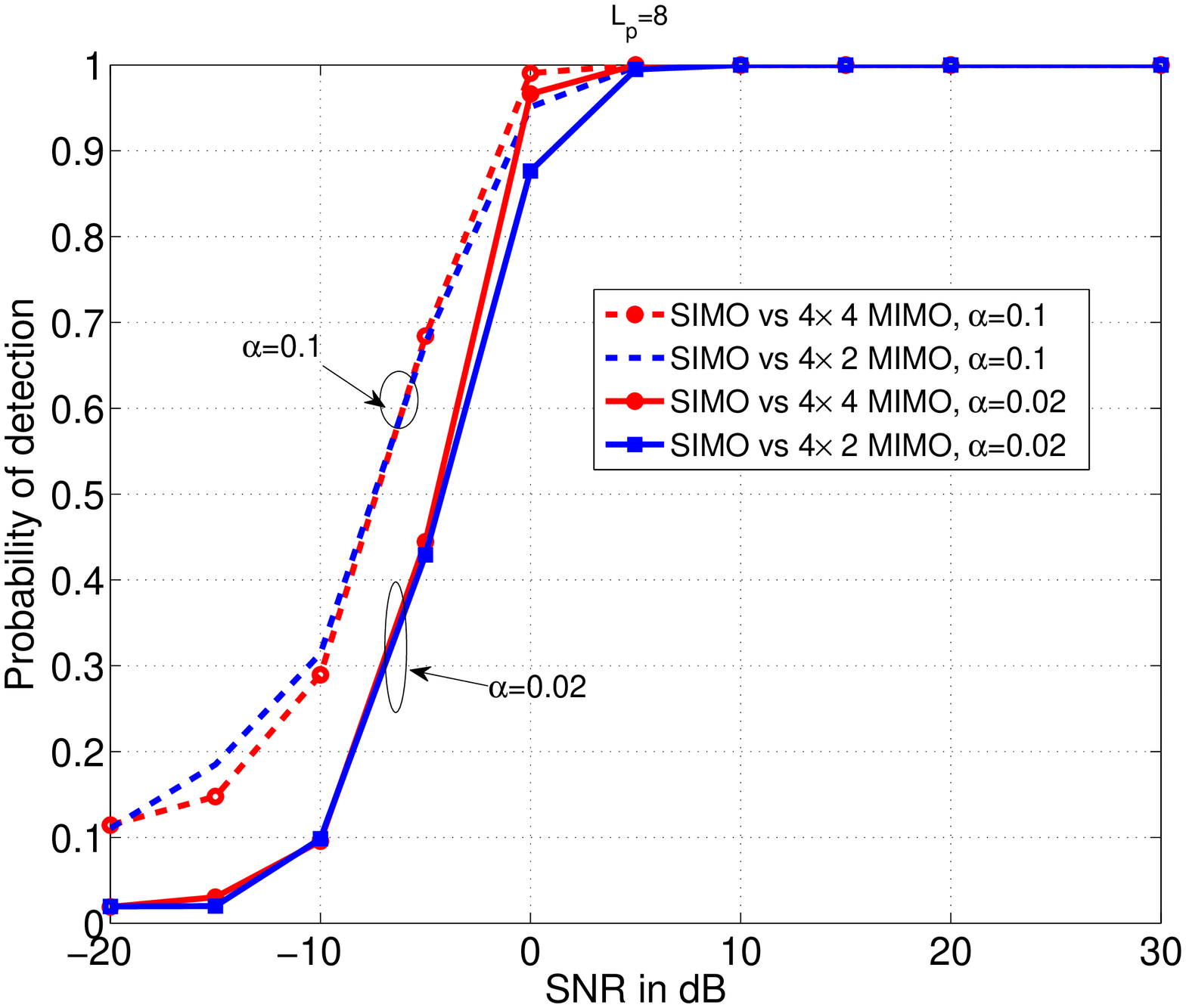,width=9.50cm}}
\caption{Probability of detection (SIMO/MIMO) vs. SNR via  GLRT  in the synchronous case; $L_p=8$, varying $\alpha$}\label{fig_Pd_snr}
\end{figure}

In Fig. \ref{fig_Detection}, we plot the symbol error rate (SER)  when the data is recovered via ML estimation after the detection decision is made via GLRT in the synchronous case. The channel matrix for ML estimation of data  is taken as the one estimated in the GLRT algorithm.  Fig. \ref{fig_Detection}  depicts the recovery performance depending on  the operating point of GLRT; i.e. SER vs. $P_f$ and $P_d$ is plotted.  We further plot the SER when it is exactly known what transmission scheme used (single or multiple). As a reference, we also plot SER with known channel matrix. It can be seen that, when SNR is high (Figs. \ref{fig_Detection} (b) and (d)), SER obtained after making the detection decision based on GLRT with a proper selection of threshold  almost coincides with that when the transmission system is exactly known. However, as the probability of  false alarms increases, the SER also increases. Thus, it is important to design the threshold of the GLRT so that a desired  probability of false alarm is achieved. The design of the threshold for synchronous GLRT in closed-form was  discussed in Section \ref{threshold}.

In Fig. \ref{fig_Pd_snr}, we plot the probability of detection (SIMO/MIMO) vs. SNR for a fixed false alarm rate, $\alpha$ with the synchronous GLRT detector. To achieve a desired value for $\alpha$, the threshold of the GLRT is designed as in (\ref{taug}).   It is observed that when the  average SNR per Rx antenna exceeds $\approx 5~dB$,  perfect detection can be observed at  relatively low values for $\alpha$.
\begin{figure}
\centerline{\epsfig{figure=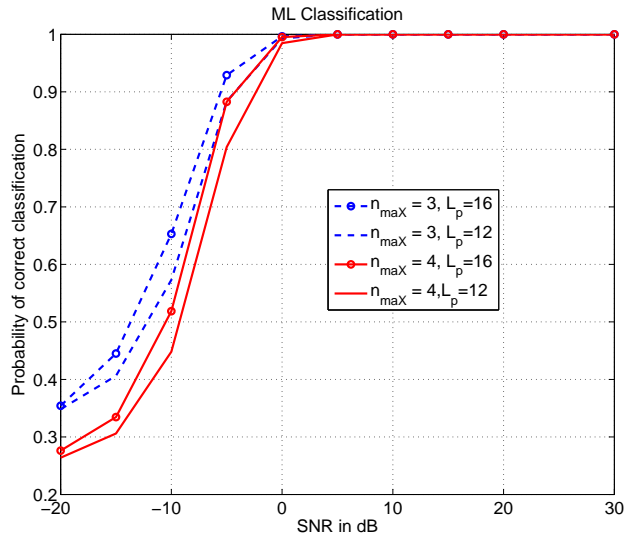,width=9.50cm}}
\caption{Probability of correct Tx system classification  vs. SNR  using ML approach in the synchronous case}\label{fig_Pd_snr_ML}
\end{figure}
\subsection{Performance of ML Classifier}
In Fig. \ref{fig_Pd_snr_ML}, we show  the performance of the ML classifier when the  exact number of antennas used for MIMO is not known. We consider the synchronous case, and assume that  the exact knowledge of the pilot sequences is  available at the receiver. In simulations, the true value for the number of antennas is selected uniformly among $\{1,\cdots, n_{\max}\}$ while two values were considered for $n_{\max}$.  Classification is performed based on the pilot  data used for channel estimation with $L_p=12$ and $L_p=16$. As expected, it can be observed that, the correct classification rate with $3$ Tx antennas is higher than that with when  there are $4$ Tx antennas. As $n_{\max}$ increases, the number of hypotheses also increases leading to weaker distinguishably among hypotheses.    However,  when the SNR exceeds a certain threshold $\approx  0dB$, the ML classification method  provides perfect classification on the Tx system for both values of $n_{\max}$.

\section{Discussion and Summary}\label{conclusion}
 In a typical MIMO system, symbol timing synchronization and channel estimation are required to be performed  before signal demodulation. One common approach to achieve these two tasks is to utilize some pilot data  assigned to each Tx antenna.  In this work, we  have explored how pilot data used for symbol timing synchronization and channel estimation can be exploited to solve  the Tx system classification problem. When perfect timing information is available at the receiver,  Tx system  classification can be performed with the pilot data used for channel estimation. Then, designing ML based classifiers  that estimate unknown parameters  so that the likelihood function is maximized  is not computationally challenging.

 In the case where Tx system is binary (i.e., SIMO vs MIMO with a known number of Tx antennas), it has been shown that, the synchronous GLRT detector with a suitable  threshold classifies  MIMO vs. SIMO perfectly when the SNR  exceeds a certain level. Correlation detector is another simple suboptimal  approach which  shows comparable performance to GLRT in the high SNR region and when the number of antennas used for  MIMO is not very small. However, when the number of Tx antennas in the MIMO system is small,  the correlation detector has a significant  performance   gap compared to  the GLRT detector. When the perfect timing information is not available at the receiver, it also  has to be estimated. In a system where time delay is also estimated via pilot data, we  compute decision statistics for asynchronous  MIMO/SIMO detection  based on that pilot data.   We developed asynchronous GLRT and  correlation detectors  where the estimation of the time delay for both detector requires one-dimensional numerical optimization step.  It was observed that the asynchronous  correlation detector   becomes promising  compared to GLRT  when as the  number of antennas used in the MIMO model increases, even when the time delay is ignored.

In the general multiple hypothesis testing case where the transmission system can take more than two possibilities, it was shown that the performance of the ML classifier degrades as the number of hypotheses increases. However, as the SNR exceeds a certain threshold, perfect classification can be achieved.

In  summary, we discussed several algorithms to classify Tx system  based on pilot data used for timing synchronization and channel estimation. Test statistics for classification were constructed from this  pilot data  without additional overhead. This joint processing is promising in applications where the receiver and the transmitter have limited cooperation and when multiple antennas are used in an adaptive manner to maintain the required quality of service. An interesting future direction is to explore novel techniques to determining the transmission scheme when the transmitter and the receiver are not cooperative at all.

\section*{Appendix A}
\subsection*{Proof of Proposition \ref{prop1}}

Let $\mathbf P_S = U_S U_S^H$ and $\mathbf P_M = U_M U_M^H$ where $U_S= \mathbf R_S^{H}(\mathbf R_S \mathbf R_S^{H})^{-1/2}$ and $U_M=\mathbf R_M^{H}(\mathbf R_M \mathbf R_M^{H})^{-1/2}$ so that $U_S^H U_S = \mathbf I$ and $U_M^H U_M = \mathbf I $. Then we can write,
\begin{eqnarray*}
&~& \mathrm{tr}((\mathbf P_S^{\bot} - \mathbf P_M^{\bot}) \mathbf Y^H \mathbf Y)\nonumber\\
 &=& \sum_{l=1}^m \mathbf y_l^H U_M U_M^H \mathbf y_l - \sum_{l=1}^m \mathbf y_l^H U_S U_S^H \mathbf y_l\nonumber\\
 &=&\sum_{l=1}^m  \mathbf z_l^H \mathbf z_l - \sum_{l=1}^m  \mathbf v_l^H \mathbf v_l=\tilde Z_1 -\tilde Z_2
\end{eqnarray*}
where  $\tilde Z_1 = \sum_{l=1}^m  \mathbf z_l^H \mathbf z_l$ and $\tilde Z_2 = \sum_{l=1}^m  \mathbf v_l^H \mathbf v_l$, $\mathbf y_l$ is the $l$-th row vector of $\mathbf Y$,  $\mathbf z_l = U_M ^H \mathbf y_l$ and  $\mathbf v_l = U_S ^H \mathbf y_l$. We can show that the two random variables $ \frac{2}{\sigma_w^2} \tilde Z_1$ and $\frac{2}{\sigma_w^2}\tilde Z_2$ have Chi-squared distributions under both hypotheses.

It is noted that $ \tilde Z_1$ and $ \tilde Z_2$ are two correlated chi squared random variables and the  computation of the pdf of $\tilde Z =  \tilde Z_1 -  \tilde Z_2$ is difficult. Thus, in the following we approximate $\tilde Z$ to be a Gaussian random variable since using central limit theorem,  we can approximate $\tilde Z_1$ and $\tilde Z_2$ to be Gaussian when $mL_p$ is sufficiently large. Let $ \tilde Z | \mathcal H_2 \sim \mathcal N(\tilde\mu, \tilde\sigma^2)$. It is noted that under $\mathcal H_2$, the $l$-th row vector of $\mathbf Y$ is Gaussian with $\mathbf y_l \sim \mathcal C \mathcal N (\mathbf H_S(l)\tilde{\mathbf R}_S, \sigma_w^2 \mathbf I)$ where $\mathbf H_S(l)$ is the $l$-the element of $\mathbf H_S$ for $l=1,\cdots,m$ and $\tilde{\mathbf R}_S = {\mathbf R}_S^{T}$ is a $L_p\times 1$ vector containing $L_p$ training symbols. Then $\tilde {\mathbf y}_l = \frac{\mathbf y_l}{\sigma_w}  \sim \mathcal C \mathcal N (\frac{\mathbf H_S(l)}{\sigma_w}\tilde{\mathbf R}_S,  \mathbf I)$.
We can compute $\tilde\mu$ as,
\begin{eqnarray*}
\tilde \mu =  (\tilde{\mathbf R}_S^{H} (\mathbf P_M - \mathbf P_S) \tilde{\mathbf R}_S)\sum_{l=1}^m |\mathbf H_S(l)|_2^2  + m~\sigma_w^2 \mathrm{tr}(\mathbf P_M - \mathbf P_S).
\end{eqnarray*}
The variance of  $\tilde Z$, $\tilde\sigma^2$, is given by,
\begin{eqnarray*}
\tilde\sigma^2 = \mathrm{var}(\tilde Z_1) + \mathrm{var}(\tilde Z_2) - 2 \mathrm{cov} (\tilde Z_1,\tilde Z_2).
\end{eqnarray*}
We have,
$
\mathrm{var}(\tilde Z_1) = \sum_{l=1}^m \mathrm{var}\{\mathbf y_l^H \mathbf P_M \mathbf y_l\} $
and
$
\mathrm{var}\{\mathbf y_l^H \mathbf P_M \mathbf y_l\}  = \mathbb E \{(\mathbf y_l^H \mathbf P_M \mathbf y_l)^2\} - (\mathbb E\{\mathbf y_l^H \mathbf P_M \mathbf y_l\})^2.
$
Using results  in (\cite{Bao1}), we can show that,
\begin{eqnarray*}
\mathrm{var}(\tilde Z_1) = 2\sigma_w^2  \tilde{\mathbf R}_S^{H} \mathbf P_M  \tilde{\mathbf R}_S \sum_{l=1}^m{|\mathbf H_S(l)|^2} +  m \sigma_w^4\mathrm{tr}(\mathbf P_M )
\end{eqnarray*}
and similarly,
\begin{eqnarray*}
\mathrm{var}(\tilde Z_2) = 2 \sigma_w^2  \tilde{\mathbf R}_S^{H} \mathbf P_S  \tilde{\mathbf R}_S \sum_{l=1}^m{|\mathbf H_S(l)|^2} +  m \sigma_w^4\mathrm{tr}(\mathbf P_S ).
\end{eqnarray*}
Next we compute  the covariance of $\tilde Z_1$ and $\tilde Z_2$.
We have
\begin{eqnarray}
\mathrm{cov}(\tilde Z_1,\tilde Z_2) = \mathbb E\{\tilde Z_1 \tilde Z_2\} - \mathbb E\{\tilde Z_1 \}\mathbb E\{ \tilde Z_2\}. \label{cov1}
\end{eqnarray}
We can compute $\mathbb E\{\tilde Z_1 \tilde Z_2\}$ as,
\begin{eqnarray}
\mathbb E\{\tilde Z_1 \tilde Z_2\}& =& \mathbb E\left \{  \sum_{l=1}^m \mathbf y_l^H \mathbf P_M \mathbf y_l \sum_{l=1}^m \mathbf y_l^H \mathbf P_S \mathbf y_l  \right\}\nonumber\\
&=&  \sigma_w^4\sum_{l=1}^m  \mathbb E\left \{  \tilde{\mathbf y}_l^H \mathbf P_M \tilde{\mathbf y}_l \tilde{\mathbf y}_l^H \mathbf P_S \tilde{\mathbf y}  \right\}  \nonumber\\
&+& \sigma_w^4\underset{l\neq j}{\sum}\mathbb E\{\tilde{\mathbf y}_l^H \mathbf P_M\tilde{\mathbf y}_l\}\mathbb E\{ \tilde{\mathbf y}_j^H \mathbf P_S \tilde{\mathbf y}_j \}. \label{cov2}
\end{eqnarray}
We have,
\begin{eqnarray*}
\mathrm{cov}(\tilde Z_1,\tilde Z_2) &=&  \sigma_w^4 \sum_{l=1}^m  \mathbb E\left \{  \tilde{\mathbf y}_l^H \mathbf P_M \tilde{\mathbf y}_l \tilde{\mathbf y}_l^H \mathbf P_S \tilde{\mathbf y}  \right\}   \nonumber\\
&-&\sigma_w^4 {\sum}_{l=1}^m \mathbb E\{\tilde{\mathbf y}_l^H \mathbf P_M\tilde{\mathbf y}_l\}\mathbb E\{ \tilde{\mathbf y}_l^H \mathbf P_S \tilde{\mathbf y}_l \}.
\end{eqnarray*}
It can be shown that,
\begin{eqnarray*}
\mathrm{cov}(\tilde Z_1,\tilde Z_2)   &=& 2 \sigma_w^2  \tilde{\mathbf R}_S^{H} \mathbf P_M \mathbf P_S  \tilde{\mathbf R}_S \sum_{l=1}^m|\mathbf H_S(l)|^2 \nonumber\\
&+&  m \sigma_w^4 \mathrm{tr}(\mathbf P_M \mathbf P_S).
\end{eqnarray*}
Then the variance is given by,
 \begin{eqnarray}
 \tilde\sigma^2 &=& 2 \sigma_w^2 ( \tilde{\mathbf R}_S^{H}( \mathbf P_M  + \mathbf P_S - 2\mathbf P_M \mathbf P_S) \tilde{\mathbf R}_S \sum_{l=1}^m{|\mathbf H_S(l)|^2} \nonumber\\
 &+&  m \sigma_w^4\mathrm{tr}(\mathbf P_M+\mathbf P_S -2\mathbf P_M \mathbf P_S). \label{tsigma}
 \end{eqnarray}
Since $\mathbf H_S$ is to be estimated, the threshold is found based on the estimated values for $\mathbf H_S$  and approximate values for $\tilde\mu$ and $\tilde\sigma$ are obtained accordingly.
Thus, we have,
\begin{eqnarray*}
 Pr (\Lambda_{GLRT,sync} \geq \tau_g |\mathcal H_2 ) &\approx & Pr (\tilde Z \geq  \tau_g)\approx  Q\left(\frac{{\tau_g}{} - \tilde \mu}{\tilde\sigma}\right)
\end{eqnarray*}
completing the proof.

\bibliographystyle{IEEEtran}
\bibliography{IEEEabrv,bib1}

\begin{thebibliography}{10}
\providecommand{\url}[1]{#1}
\csname url@samestyle\endcsname
\providecommand{\newblock}{\relax}
\providecommand{\bibinfo}[2]{#2}
\providecommand{\BIBentrySTDinterwordspacing}{\spaceskip=0pt\relax}
\providecommand{\BIBentryALTinterwordstretchfactor}{4}
\providecommand{\BIBentryALTinterwordspacing}{\spaceskip=\fontdimen2\font plus
\BIBentryALTinterwordstretchfactor\fontdimen3\font minus
  \fontdimen4\font\relax}
\providecommand{\BIBforeignlanguage}[2]{{%
\expandafter\ifx\csname l@#1\endcsname\relax
\typeout{** WARNING: IEEEtran.bst: No hyphenation pattern has been}%
\typeout{** loaded for the language `#1'. Using the pattern for}%
\typeout{** the default language instead.}%
\else
\language=\csname l@#1\endcsname
\fi
#2}}
\providecommand{\BIBdecl}{\relax}
\BIBdecl

\bibitem{Mitola_93}
J.~MitolaIII, ``Software radios: \textsc{S}urvey, critical evaluation and
  future directions,'' \emph{IEEE Aerosp. Electron. Syst. Mag.}, no.~4, pp.
  25--36, Apr. 1993.

\bibitem{mitola_pc99}
J.~Mitola and G.~Q. Maguire, ``Cognitive radio: {M}aking software radios more
  personal,'' \emph{IEEE Pers. Commun.}, vol.~6, no.~4, pp. 13--18, Aug. 1999.

\bibitem{haykin1}
S.~Haykin, ``Cognitive radio: brain-empowered wireless communications,,''
  \emph{IEEE J. Sel. Areas Commun.}, vol.~23, no.~2, pp. 201--220, 2005.

\bibitem{su_iet07}
O.~A. Dobre, A.~Abdi, Y.~Bar-{Ness}, and W.~Su, ``Survey of automatic
  modulation classification techniques: classical approaches and new trends,''
  \emph{IET Communications}, vol.~1, no.~2, pp. 137--159, Apr. 2007.

\bibitem{Ramkumar_cir09}
B.~Ramkumar, ``Automatic modulation classification for cognitive radios using
  cyclic feature detection,'' \emph{Circuits Syst. Mag.}, vol.~09, pp. 27--45,
  June 2009.

\bibitem{Choqueuse_TSP10}
V.~Choqueuse, M.~Marazin, L.~Collin, K.~C. Yao, and G.~Burel, ``Blind
  recognition of linear space–time block codes: A likelihood-based approach,''
  \emph{IEEE Trans. Signal Process.}, vol.~58, no.~3, pp. 1290--1299, Mar.
  2010.

\bibitem{Dobre_2015}
O.~Dobre, ``Signal identification for emerging intelligent radios: classical
  problems and new challenges,'' \emph{IEEE Instrum. Meas. Mag.}, vol.~18,
  no.~2, pp. 11--18, Apr. 2015.

\bibitem{Xu_VT2010}
J.~Xu, W.~Su, and M.~Zhou, ``Software-defined radio equipped with rapid
  modulation recognition,'' \emph{IEEE Trans. Veh. Technol.}, vol.~59, no.~4,
  pp. 1659--1667, May 2010.

\bibitem{Larsson_SPM09}
E.~G. Larsson, ``\textsc{MIMO} detection methods: how they work,'' \emph{IEEE
  Signal Process. Mag.}, vol.~26, no.~3, pp. 91--95, May 2009.

\bibitem{Eldemerdash_CoMS16}
Y.~A. Eldemerdash, O.~A. Dobre, and M.~Oner, ``Signal identification for
  multiple-antenna wireless systems: Achievements and challenges,'' \emph{IEEE
  Commun. Surveys Tuts.}, 2016, to appear.

\bibitem{Choqueuse_MTA1}
V.~Choqueuse, S.~Azou, K.~Yao, L.~Collin, and G.~Burel, ``Blind modulation
  recognition for \textsc{MIMO} systems,'' \emph{MTA Review}, vol.~19, no.~2,
  pp. 183--196, June 2009.

\bibitem{Hassan_twc12}
K.~Hassan, I.~Dayoub, W.~Hamouda, C.~Nzeza, and M.~Berbineau, ``Blind digital
  modulation identification for spatially-correlated \textsc{MIMO} systems,''
  \emph{IEEE Trans. Wireless Commun.}, vol.~11, no.~2, pp. 683--693, Feb. 2012.

\bibitem{Muhlhaus_vtc1}
M.~S. M$\ddot{\mathrm{u}}$hlhaus, M.~Oner, O.~A. Dobre, H.~U. Jäkel, and F.~K.
  Jondral, ``Automatic modulation classification for \textsc{MIMO} systems
  using fourth order cumulants,'' in \emph{IEEE Vehic. Technology Conf.
  (VTC)-Fall}, 2012, pp. 1--5.

\bibitem{Muhlhaus_CL13}
M.~S. M$\ddot{\mathrm{u}}$hlhaus, M.~Oner, O.~A. Dobre, and F.~K. Jondral, ``A
  low complexity modulation classification algorithm for \textsc{MIMO}
  systems,'' \emph{IEEE Commun. Lett.}, vol.~17, no.~10, pp. 1881--1884, Feb.
  2013.

\bibitem{Zhu_MILCOM1}
Z.~Zhu and A.~K. Nandi, ``Blind modulation classification for \textsc{MIMO}
  systems using expectation maximization,'' in \emph{IEEE Military
  Communications Conference (MILCOM)}, Baltimore, MD, Oct. 2014, pp. 754--759.

\bibitem{Marey_SPL14}
M.~Marey and O.~A. Dobre, ``Blind modulation classification algorithm for
  single and multiple-antenna systems over frequency-selective channels,''
  \emph{IEEE Signal Process. Lett.}, vol.~21, no.~9, pp. 1098--1102, Sep. 2014.

\bibitem{Kharbech_WCL14}
S.~Kharbech, I.~Dayoub, M.~Zwingelstein-Colin, E.~P. Simon, and K.~Hassan,
  ``Blind digital modulation identification for time-selective \textsc{MIMO}
  channels,'' \emph{IEEE Wirless Commun. Lett.}, vol.~3, no.~4, pp. 373--376,
  Aug. 2014.

\bibitem{Chikha_WCL14}
W.~B. Chikha, I.~Dayoub, W.~Hamouda, and R.~Attia, ``Modulation recognition for
  \textsc{MIMO} relaying broadcast channels with direct link,'' \emph{IEEE
  Wirless Commun. Letters}, vol.~3, no.~1, pp. 50--53, Feb. 2014.

\bibitem{Jin_tsp13}
X.~Jin, J.-S. No, and D.-J. Shin, ``Source transmit antenna selection for
  \textsc{MIMO} decode-and-forward relay networks,'' \emph{IEEE Trans. Signal
  Process.}, vol.~61, no.~7, pp. 1657--1662, Apr. 2013.

\bibitem{Wax_85}
M.~Wax and T.~Kailath, ``Detection of signals by information theoretic
  criteria,'' \emph{IEEE Trans. Acoust., Speech, Signal Process.}, vol.~33,
  no.~2, pp. 387--392, Apr. 1985.

\bibitem{AOUADA_2003}
S.~Aouada, A.~Zoubir, and C.~See, ``A comparative study on source number
  detection,'' in \emph{Proc. of the 7th International Symposium on Signal
  Processing and Its Applications, ISSPA}, vol.~1, Paris, France, Jul. 2003,
  pp. 173--176.

\bibitem{Xu_tsp94}
G.~Xu, R.~H. Roy, and T.~Kailath, ``Detection of number of sources via
  exploitation of centro-synmehy property,'' \emph{IEEE Trans. Signal
  Process.}, vol.~42, no.~1, pp. 102--112, Jan. 1994.

\bibitem{Chen_tsp91}
W.~Chen, K.~Wong, and I.~Reilly, ``Detection of the number of signals: A
  predicted eigen-threshold approach,'' \emph{IEEE Trans. Signal Process.},
  vol.~39, no.~5, pp. 1088--1098, May 1991.

\bibitem{Oularbi_2013}
M.-R. Oularbi, S.~Gazor, A.~Aissa-El-Bey, and S.~Houcke, ``Exploiting the pilot
  pattern orthogonality of \textsc{OFDMA} signals for the estimation of base
  stations number of antennas,'' in \emph{8th International Workshop on
  Systems, Signal Processing and their Applications (WoSSPA)}, May 2013, pp.
  465--470.

\bibitem{Oularbi_CL13}
------, ``Enumeration of base station antennas in a cognitive receiver by
  exploiting pilot patterns,'' \emph{IEEE Commun. Lett.}, vol.~17, no.~1, pp.
  8--11, Jan. 2013.

\bibitem{Somekh_MILCOM07}
O.~Somekh, O.~Simeone, Y.~Bar-Ness, and W.~Su, ``Detecting the number of
  transmit antennas with unauthorized or cognitive receivers in \textsc{MIMO}
  systems,'' in \emph{IEEE Military Communications Conference (MILCOM)},
  Orlando, FL, USA, Oct. 2007, pp. 1--5.

\bibitem{Ohlmer_VTC08}
E.~Ohlmer, T.-J. Liang, and G.~Fettweis, ``Algorithm for detecting the number
  of transmit antennas in \textsc{MIMO-OFDM} systems,'' in \emph{IEEE Vehicular
  Technology Conference (VTC)}, Singapore, May 2008, pp. 478--482.

\bibitem{Ohlmer_VTC08_2}
------, ``Algorithm for detecting the number of transmit antennas in
  \textsc{MIMO-OFDM} systems : Receiver integration,'' in \emph{IEEE Vehicular
  Technology Conference (VTC-Fall)}, 2008, pp. 1--5.

\bibitem{Mohammadkarimi_2015}
M.~Mohammadkarimi, E.~Karami, and O.~A. Dobrel, ``A novel algorithm for blind
  detection of the numbe r of transmit antenna,'' in \emph{CROWNCOM 2015},
  2015, pp. 441--450.

\bibitem{Marey_TWC12}
M.~Marey, O.~A. Dobre, and R.~Inkol, ``Classification of space-time block codes
  based on second-order cyclostationarity with transmission impairments,''
  \emph{IEEE Trans. Wirless Commun.}, vol.~11, no.~7, pp. 2574--2584, 2012.

\bibitem{Berenguer_TSP05}
I.~Berenguer, X.~Wang, and V.~Krishnamurthy, ``Adaptive \textsc{MIMO} antenna
  selection via discrete stochastic optimization,'' \emph{IEEE Trans. Signal
  Process.}, vol.~53, no.~9, pp. 4315--4329, Nov. 2005.

\bibitem{Wang_WCNC07}
D.~Wang and J.~Zhang, ``Timing synchronization for \textsc{MIMO-OFDM}
  \textsc{WLAN} systems,'' in \emph{IEEE Wireless Communications and Networking
  Conference, (WCNC)}, Mar. 2007, pp. 1177--1182.

\bibitem{Naguib_JSC98}
A.~F. Naguib, V.~Tarokh, N.~Seshadri, and A.~R. Calderbank, ``A space–time
  coding modem for high-data-rate wireless communications,'' \emph{IEEE J. Sel.
  Areas Commun.}, vol.~16, no.~8, pp. 1459--1478, Oct. 1998.

\bibitem{Bliss_TSP10}
D.~W. Bliss and P.~A. Parker, ``Temporal synchronization of \textsc{MIMO}
  wireless communication in the presence of interference,'' \emph{IEEE Trans.
  Signal Process.}, vol.~58, no.~3, pp. 1794--1806, Mar. 2010.

\bibitem{Budianu_ICC01}
C.~Budianu and L.~Tong, ``Channel estimation for space-time orthogonal block
  codes,'' in \emph{IEEE International Conference on Communications (ICC)},
  June 2001, pp. 1127-- 1131.

\bibitem{Biguesh_TSP06}
M.~Biguesh and A.~B. Gershman, ``Training-based \textsc{MIMO} channel
  estimation: A study of estimator tradeoffs and optimal training signals,''
  \emph{IEEE Trans. Signal Process.}, vol.~54, no.~3, pp. 884--893, Mar. 2006.

\bibitem{Shin_TC08}
O.-S. Shin, H.~T. Kung, and V.~Tarokh, ``Construction of block orthogonal golay
  sequences and application to channel estimation of \textsc{MIMO-OFDM}
  systems,'' \emph{IEEE Trans. Commun.}, vol.~56, no.~1, pp. 27--31, Jan. 2008.

\bibitem{Hassibi_TIT03}
B.~Hassibi and B.~M. Hochwald, ``How much training is needed in
  multiple-antenna wireless links?'' \emph{IEEE Trans. Inf. Theory}, vol.~49,
  no.~4, pp. 951--963, Apr. 2003.

\bibitem{Proakis:book}
J.~Proakis, \emph{Digital Communications}.\hskip 1em plus 0.5em minus
  0.4em\relax McGraw Hill, 1995.

\bibitem{TDMA}
\emph{TIA/EIA, TIA/EIA/IS-136.1:TDMA Cellular/PCS—Radio Interface— Mobile
  Station—Base Station Compatibility—Digital Control Channel}.\hskip 1em plus
  0.5em minus 0.4em\relax Englewood Cliffs, NJ: Prentice Hallelujah, 1997.

\bibitem{Tarokh_TIT1}
V.~Tarokh, H.Jafarkhani, and A.~R. Calderbank, ``Space–time block codes from
  orthogonal designs,'' \emph{IEEE Trans. Info. Theory}, vol.~45, no.~5, pp.
  1456--1467, 1999.

\bibitem{Samardzija_TSP03}
D.~Samardzija and N.~Mandayam, ``Pilot assisted estimation of \textsc{MIMO}
  fading channel response and achievable data rates,'' \emph{IEEE Trans. Signal
  Process.}, vol.~51, no.~11, pp. 2882--2890, Nov. 2003.

\bibitem{rubin_jrs77}
A.~P. Dempster, N.~M. Laird, and D.~B. Rubin, ``Maximum likelihood from
  incomplete data via the {EM} algorithm,'' \emph{Journal Roy. Stat. Soc.
  (Series B)}, vol.~39, no.~1, pp. 1--38, 1977.

\bibitem{Wimalajeewa_amc_14}
O.~Ozdemir, T.~Wimalajeewa, B.~Dulek, P.~K. Varshney, and W.~Su, ``Hybrid
  maximum likelihood based linear modulation classification with multiple
  sensors via generalized \textsc{EM} algorithm,'' \emph{IEEE Trans. Wirless
  Commun.}, vol.~14, no.~11, pp. 6389--6400, Nov. 2015.

\bibitem{Bao1}
Y.~Bao and A.~Ullah, ``Expectation of quadratic forms in normal and nonnormal
  variables with applications,'' \emph{Journal of Statistical Planning and
  Inference}, vol. 140, no.~5, pp. 1193--1205, 2010.

\end{thebibliography}

\end{document}